\documentclass{article}

\usepackage{arxiv}

\usepackage[utf8]{inputenc}
\usepackage[T1]{fontenc}
\usepackage{hyperref}
\usepackage{url}
\usepackage{booktabs}
\usepackage{amsfonts}
\usepackage{amsmath}
\usepackage{amssymb}
\usepackage{amsthm}
\usepackage{nicefrac}
\usepackage{microtype}
\usepackage{xcolor}
\usepackage{graphicx}
\usepackage{algorithm}
\usepackage{algorithmic}
\usepackage{natbib}
\usepackage{graphicx}
\usepackage{multirow}
\usepackage{makecell}
\usepackage{threeparttable}
\usepackage{ulem}
\usepackage{float}
\usepackage{enumerate}

\newtheorem{theorem}{Theorem}

\title{Trustworthy AI/ML Regression and Unbiased Causal Inference 
for Real-World  Data }

\author{
  \textbf{Yifei Xu}$^{1,2,3}$, \textbf{Hwiyoung Lee}$^{1,2,3}$, \textbf{Zhenyao Ye}$^{1,2,3}$, \textbf{Yezhi Pan}$^{4}$, 
  \textbf{Jingsong Zhou}$^{4}$, \textbf{Yun Yang}$^{4}$, \\[4pt] 
  \textbf{Chixiang Chen}$^{1,3}$, \textbf{Shuo Chen}$^{1,2,3,*}$ \\[4pt]
  $^{1}$Division of Biostatistics and Bioinformatics, Department of Epidemiology and Public Health,\\
  School of Medicine, University of Maryland\\[2pt]
  $^{2}$Maryland Psychiatric Research Center, Department of Psychiatry,\\
  School of Medicine, University of Maryland\\[2pt]
  $^{3}$The University of Maryland Institute for Health Computing (UM-IHC)\\[4pt]
  $^{4}$Department of Mathematics, University of Maryland, College Park\\[4pt]
  \texttt{$^{*}$shuochen@som.umaryland.edu}
}

\begin{document}

\maketitle

\begin{abstract}
Real-World Data (RWD), with its large sample sizes and rich clinical detail, offers a compelling alternative to randomized controlled trials (RCTs) for studying treatment effects in diverse and complex patient populations. However, its observational nature introduces confounding that prevents straightforward comparative effectiveness research. Target trial emulation leverages RWD to estimate average treatment effects (ATE) at the population scale and diversity that RCTs cannot achieve, yet its validity depends critically on unbiased ATE estimation under high-dimensional confounding. Many causal inference pipelines address high-dimensional confounding through machine learning and artificial intelligence (ML/AI) outcome regression. However,  commonly used ML/AI regression models exhibit systematic prediction bias, with predicted outcomes shrinking toward the marginal outcome mean. This structural bias propagates into ATE estimation and cannot be corrected by cross-fitting, ensemble methods, or any standard ML practice. In this work, we first quantitatively characterize how systematic prediction bias in ML/AI outcome regression leads to biased ATE estimates in causal inference models. We further propose an unbiased ML/AI regression-based causal inference framework to ensure unbiased ATE estimation for observational studies. We demonstrate our approach by studying the effects of opioids on cardiovascular health in patients with chronic pain using UK Biobank data.
\end{abstract}
% key words. average treatment effect,  continuous outcome, high-dimensional covariates,    systematic prediction bias, target trial emulation
%-----------------------------------------------------------------------
\section{Introduction}
\label{sec:intro}
%-----------------------------------------------------------------------
% \textcolor{red}{good to include more references, e.g. to RWE, RCT, and any claims made here.}
Real-World Data (RWD) is routinely generated from a variety of healthcare sources, including electronic health records (EHRs), administrative claims databases, disease registries, and wearable device monitoring, capturing patient experiences outside the controlled setting of clinical trials. Such data are then used to generate Real-World Evidence (RWE), which helps evaluate the safety and effectiveness of medical products across diverse patient populations \citep{hubbard2024targettrial, antoine2023breastcancer, fu2023targettrial, kuehne2022ovarian}. While randomized clinical trials (RCTs) remain the gold standard for
establishing causal treatment effects, their practical limitations are well recognized: they are
costly, time-consuming, and their strict eligibility criteria may systematically exclude 
patients with multiple comorbidities and underrepresented minority populations, precisely the
groups bearing the greatest disease burden in routine practice.

Target trial emulation (TTE) provides a principled framework for bridging this gap
\citep{hernan2016using}. By explicitly mapping the design features of a hypothetical RCT onto
an observational dataset, TTE guards against common biases in observational research and enables
studies of treatment effects at a scale, duration, and population breadth that no RCT could
feasibly achieve.  {Unbiased estimation of the average treatment effects (ATE) is a fundamental requirement for TTE
to yield valid biomedical conclusions} that guide clinical practice and inform evidence-based
decision making.

In this study, however, we show that ATE estimation for continuous outcomes in TTE can lead to nonnegligible \textit{bias} in the high-dimensional confounding settings common to RWE data when machine learning and artificial intelligence (ML/AI) regression is used to model potential outcomes. The root cause is likely systematic prediction bias (SPB), a widespread yet underappreciated property of ML/AI regression.

\textbf{Systematic Prediction Bias in ML/AI regression.}
Regression plays a central role in both statistics and machine learning as a standard framework for predicting continuous outcomes. Although AI has achieved remarkable progress in language and categorical tasks, such as text generation and topic classification, regression remains the foundational tool for applications involving continuous outcomes.
However, as illustrated in Figure~\ref{fig:spb} (a), standard ML/AI regression models, including random
forests, gradient boosting, kernel methods, deep neural networks, and regularized regression all exhibit a characteristic shrinkage pattern: large outcome values are
underpredicted while small outcome values are overpredicted, with predictions uniformly
shrinking toward the marginal outcome mean \citep{lee2025systematic}. 

This systematic prediction bias emerges
because commonly used objective functions such as mean squared error (MSE) favor solutions with
 bias in exchange for reduced variance, a trade-off that improves overall prediction
accuracy but introduces a deterministic, direction-dependent prediction error. While SPB may
reduce the MSE and is therefore ``acceptable'' in pure prediction settings, it is
not acceptable when ML/AI regression is used to compute potential outcomes in a causal
inference pipeline: the resulting ATE bias is structural and cannot be removed by
cross-fitting, sample splitting, ensemble aggregation, or any other standard ML practice, as demonstrated in Figure \ref{fig:spb} (b).
\begin{figure}[ht]
  \centering
  \includegraphics[width=1\linewidth]{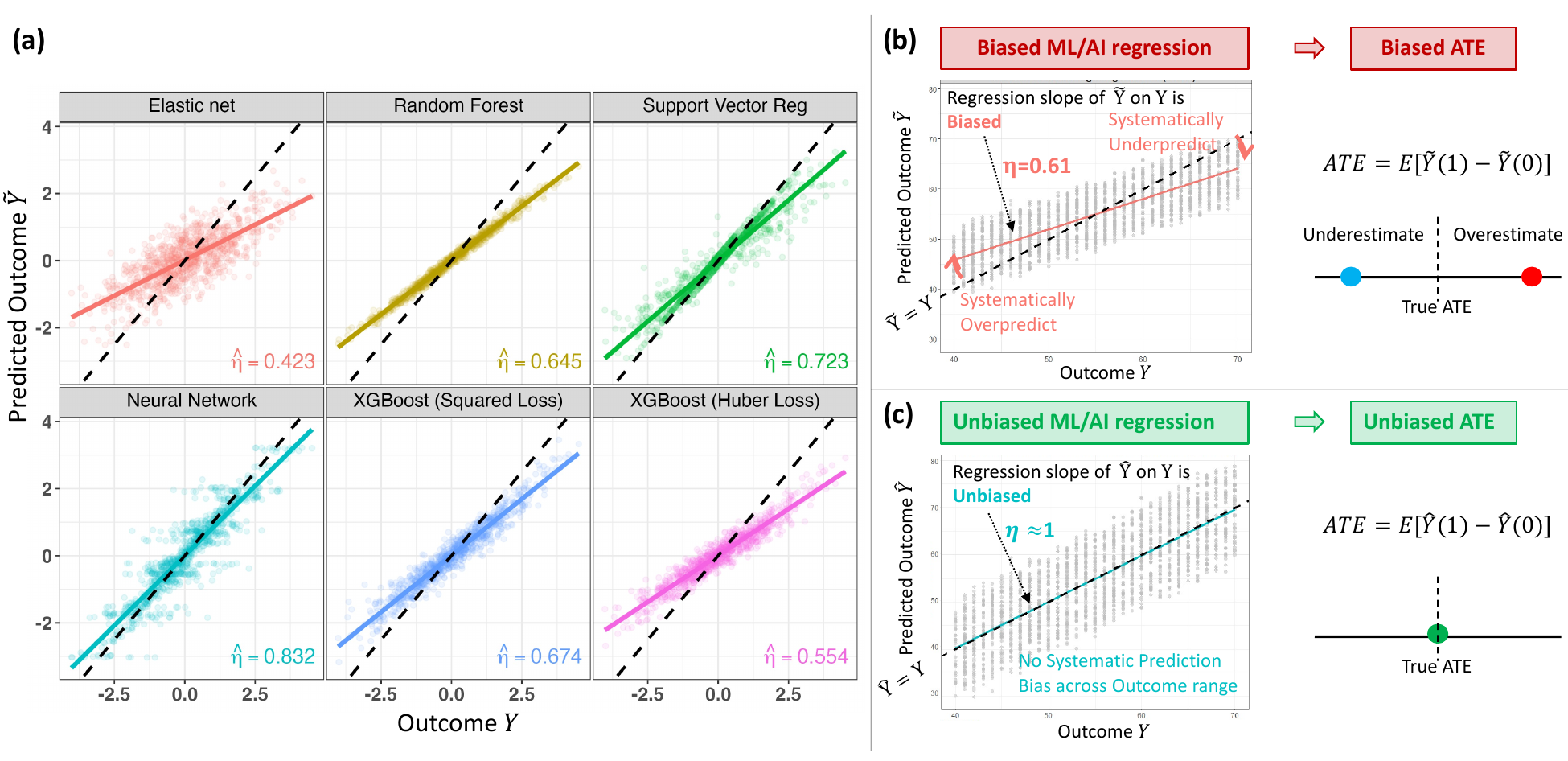}
  \caption{Illustration of systematic prediction bias  across ML/AI regression models
  and its propagation into causal inference.
  \textbf{(a)} Scatter plots of predicted outcome $\widehat{Y}$ vs.\ true outcome $Y$
  for representative ML/AI regression models (random forests, XGBoost, neural networks, LASSO, and XGBoost with Huber loss for both training and testing sets).
  The dashed diagonal is the no-bias reference ($\widehat{Y}=Y$); the solid line is the
  fitted regression with estimated slope $\widehat{\eta}$.
  SPB is present whenever $1-\widehat{\eta}\neq 0$, i.e., the solid line deviates from
  the dashed line. This pattern is ubiquitous across all ML regression (MLR) objective functions considered.
  \textbf{(b)} SPB in the MLR models propagates directly into biased ATE estimation; the magnitude and direction of ATE bias are governed by $\eta$.
  \textbf{(c)} The proposed unbiased ML regression (UMLR) framework achieves $1-\widehat{\eta}\approx 0$, yielding
  unbiased potential-outcome predictions and unbiased ATE estimates.}
  \label{fig:spb}
\end{figure}

\textbf{Contributions.}
In this article, we propose a novel causal inference framework using \textit{trustworthy} AI/ML regression
for RWD-based TTE. Specifically, we:

\begin{enumerate}
  \item identify a severe and previously unrecognized problem of biased ATE estimation in general AI/ML regression-based causal inference models, driven by the inherent systematic prediction bias in AI/ML regression;
  \item derive how the metric of systematic prediction bias  in AI/ML regression quantitatively determines the bias of ATE estimation in causal inference models;
  \item introduce a new causal inference framework built on unbiased ML regression (UMLR) that substantially mitigates ATE estimates;
  \item highlight, from a statistical standpoint, the importance of implementing \textit{trustworthy }AI/ML in causal inference, and provide a real-world application to UK Biobank data studying the effect of opioid therapy on blood pressure in patients with chronic pain.
\end{enumerate}

%-----------------------------------------------------------------------
\section{Methods}
\label{sec:methods}
%-----------------------------------------------------------------------

\subsection{Regression Prediction in Causal Inference}

RWD datasets routinely record hundreds to thousands of covariates per patient, necessitating flexible ML/AI models within causal inference pipelines. We focus on continuous outcomes, which encompass a broad range of clinically relevant primary endpoints, from blood pressure and cognitive function to disease progression scales  \citep{hubbard2024targettrial}.  Under the potential outcome
framework \citep{rubin1974estimating}, we denote the ATE as $\Delta = \mathbb{E}[Y(1) - Y(0)]$,
where $Y$ is a continuous outcome, $T \in \{0,1\}$ is a binary treatment indicator, $X \in
\mathbb{R}^p$ represents high-dimensional covariates, and $Y(t)$ is the potential outcome under
treatment $T = t$. We observe $n$ i.i.d.\ tuples $\{Y_i, X_i, T_i\}_{i=1}^n$. The identification of ATE requires several standard assumptions: consistency, the stable unit treatment value assumption (SUTVA), no unmeasured confounding, and positivity \citep{imbens2015causal}.

% \textcolor{red}{here do we need to also briefly state/review the standard assumptions such as no unmeasured confounding etc.~here for self-containedness? or are these conditions violated which results in the systematic bias in estimating $\tau(x)$? Good to clarify.} \tblue{Chixiang, do we need explicitly write down the assumptions or just a reference. }
% \tpurple{Identification of the ATE from observational data relies on the standard causal assumptions: (i) consistency, \(Y = Y(t)\) when $T=t$; (ii) unconfoundedness, \(\{Y(0), Y(1)\} \perp T \mid X\); and (iii) positivity, \(0 < e(X) < 1\), where \(e(X)=\Pr(T=1 \mid X)\) is the propensity score. Under these assumptions, $\Delta = \mathbb{E}\!\left[\mathbb{E}(Y \mid T=1, X) - \mathbb{E}(Y \mid T=0, X)\right].$}

The conditional average treatment effect (CATE) is $\tau(x) = \mu_1(x) - \mu_0(x)$, where
$\mu_t(x) = \mathbb{E}[Y \mid T=t, X=x]$. The ATE is then estimated as
\begin{equation}
  \widehat{\Delta} = \frac{1}{n}\sum_{i=1}^n \bigl[\hat{\mu}_1(X_i) - \hat{\mu}_0(X_i)\bigr].
  \label{eq:ate}
\end{equation}
% Doubly robust estimators such as AIPW augment the outcome regression with inverse probability
% weighting based on the propensity score $e(X) = P(T=1 \mid X)$:
% \begin{equation}
%   \widehat{\Delta}^{\mathrm{DR}} = \frac{1}{n}\sum_{i=1}^n
%   \left[\hat{\mu}_1(X_i) - \hat{\mu}_0(X_i)
%     + \frac{T_i}{e(X_i)}\bigl(Y_i - \hat{\mu}_1(X_i)\bigr)
%     - \frac{1-T_i}{1-e(X_i)}\bigl(Y_i - \hat{\mu}_0(X_i)\bigr)
%   \right].
%   \label{eq:aipw}
% \end{equation}
Unbiased prediction of the potential outcome $\hat{\mu}_t$ yields an ATE estimate approximating the RCT benchmark; however, systematic bias in $\hat{\mu}_t$ propagates directly into the ATE estimate. We next articulate how systematic bias in ML/AI regression induces biased ATE estimates.
%across commonly used causal inference frameworks.

\subsection{Systematic Prediction Bias (SPB) in ML/AI Regression}
\label{sec:spb}

\textbf{ML/AI regression notation.}
Let $f$ denote an ML/AI predictive function $f$ that predicts the continuous outcome $\widehat{Y}_i = f(X_i)$ thus providing $\hat{\mu}_t(X_i)$ for all treatment arms. Without loss of generality, we center the outcome so that $n^{-1}\sum_{i=1}^n Y_i=0$.
% center the outcome so that $n^{-1}\sum_{i=1}^n Y_i=0$.
% Let $X_i \in \mathbb{R}^p$ denote the vector of $p$ high-dimensional predictors for
% subject $i$, and let $Y_i$ denote the corresponding continuous outcome, for $i=1,\ldots,n$.
% The training data are $\mathbf{X}=(X_1,\ldots,X_n)^\top\in\mathbb{R}^{n\times
% p}$ and $\mathbf{Y}=(Y_1,\ldots,Y_n)^\top\in\mathbb{R}^n$. \textcolor{red}{is this repeated from the notation introduced in previous section? or better to use a unified notation system cover both sections to avoid ambiguity.} Without loss of generality, we
% center the outcome so that $n^{-1}\sum_{i=1}^n Y_i=0$. Let $f$ denote an ML/AI predictive function $f$ \textcolor{red}{maybe use $\hat f$ to highlight it is an estimator?}, and define $\widehat{Y}_i = f(X_i)$ as the fitted value, which provides estimates $\hat{\mu}_t(X_i)$.

\textbf{Systematic prediction bias of ML/AI regression.}
We consider predicted outcome as unbiased given $Y$, if $\mathbb{E}(\widehat{Y}_i \mid Y_i=y^*)=y^*$ for
all $y^*$ in the support of $Y$ (see Figure \ref{fig:spb} (c)). Throughout the paper, we use $\widehat{Y}$ to denote a generic prediction and
$\widetilde{Y}$ to denote the special case of $\widehat{Y}$ that is systematically biased;
that is, $\mathbb{E}(\widetilde{Y}_i \mid Y_i=y^*)\neq y^*$. The prediction
bias is systematic if it depends on the true outcome through a deterministic function
$h(\cdot)$, for example, $\mathbb{E}(\widetilde{Y}_i -  Y_i\mid Y_i=y^*)  = h(y^*).$ Systematic prediction bias can be illustrated using scatter plots of $\hat{Y}$ versus $Y$ (e.g., Figure \ref{fig:spb} (b)).
% \begin{equation}
%   \mathbb{E}(\widetilde{Y}_i \mid Y_i=y^*) - y^* = h(y^*).
%   \label{eq:spb-general}
% \end{equation}

 As demonstrated in Figure \ref{fig:spb}(a), SPB is present in all tested ML/AI regression models for both the training and testing datasets \citep{BELITZ:2021105006,Zhang:2012_RF:Bias,lee2025systematic}. The term “systematic” indicates that the prediction error is not merely random noise, but instead follows a non-random pattern determined by the outcome value itself. \cite{lee2025systematic} shows that predicted outcomes exhibiting SPB can reduce empirical loss functions, such as the mean squared error (MSE). Therefore, most regression models tend to favor biased predictions due to the well-known bias-variance tradeoff.
SPB is not due to the use of optimization algorithms in ML/AI regression. Instead, it is a universal phenomenon arising from the use of objective functions for continuous outcome like MSE, and is also present in ordinary least squares (OLS) regression. Although OLS yields unbiased parameter estimates and unbiased predictions conditional on predictors, $\mathbb{E}(\widehat{Y} - Y \mid X) = 0$, the underappreciated fact is that $\mathbb{E}(\widehat{Y} - Y \mid Y) \neq 0$ \citep{Treder:2021}.  %Therefore, causal inference in observational studies with OLS-derived potential outcomes can also yield biased ATE estimates. %However, the bias of estimated ATE is more pronounced in high-dimensional settings.

\textbf{A Metric for Characterizing Systematic Prediction Bias.}
As the degree of systematic prediction bias varies across $Y$, an overall metric is desirable to characterize it over the support of $Y$. Based on a large body of literature \citep{SMITH:2019,Butler:2021}, the most frequently observed form of $h(y)$ follows a linear pattern:
\begin{equation}
  \mathbb{E}(\widetilde{Y}_i - Y_i\mid Y_i=y^*)=h(y^*) - y^* = (\eta-1)\,y^*, \quad \eta \in [0,1].
  \label{eq:spb-linear}
\end{equation}
That is, $\mathbb{E}(\widetilde{Y}_i \mid Y_i=y^*)=\eta\,y^*$: every conditional expectation of the predicted outcome is shrunk toward zero (i.e., toward the grand mean of $Y$ before centering) by a factor of $\eta$. Under this linear trend pattern, the bias is most pronounced at the extremes: large outcome values are systematically underpredicted and small outcome values are systematically overpredicted (see Figure~\ref{fig:spb} (a)). We therefore quantify the degree of SPB by $1-\eta$ (or equivalently $100(1-\eta)\%$), where $1-\eta=0$ indicates no bias. In practice, $\eta$ can be estimated by the OLS slope $\widehat{\eta}$ obtained from regressing ML/AI regression predicted outcome $\widehat{Y}$ on $Y$.

\subsection{Biased ATE Estimation Induced by Systematic Prediction Bias}
\label{sec:biased-ate}

Let $\mu_t^*(X) = \mathbb{E}[Y(t) \mid X]$ denote the `oracle' conditional mean potential outcome
under treatment $t \in \{0,1\}$. Define $\bar{\mu}_t^{(s)} = \mathbb{E}[\mu_t^*(X) \mid T=s]$ as
the mean of the true potential outcome surface over the stratum $T=s$. Specifically, the subscript $t$ indicates the potential treatment group while the subscript $s$ indicates the assigned treatment group.

When classic ML/AI regression-based outcome model $\hat{\mu}_1(X)=f(X)$  is trained on the treated subsample
$\{({X}_i, Y_i) \mid T_i = 1\}_{i=1}^n$. Due to SPB, the expected prediction for a unit drawn
from the treated arm, given the true conditional mean $\mu_1^*(X)$, takes the form
%\begin{equation}
 %   \mathbb{E}\bigl[\hat{\mu}_1(X) \mid \mu_1^*(X),\, T=1\bigr]
 %   = \mathbb{E} \bigl[ \eta_1^{(1)}\,\mu_1^*(X) + %\bigl(1 - \eta_1^{(1)}\bigr)\,\bar{\mu}_1^{(1)} %\bigr]=\bar{\mu}_1^{(1)},
 %   \qquad X \sim P(X \mid T = 1),
 %   %\label{eq:mu1}
%\end{equation}
\begin{equation*}
    \mathbb{E}\bigl[\hat{\mu}_1(X) \mid \mu_1^*(X),\, T=1\bigr]
    \;=\; \eta_1^{(1)}\,\mu_1^*(X) + \bigl(1 - \eta_1^{(1)}\bigr)\,\bar{\mu}_1^{(1)},
    \qquad X \sim P(X \mid T = 1),
\end{equation*}
where $\eta_1^{(1)} \in [0,1]$ governs the degree of shrinkage toward the in-distribution
mean $\bar{\mu}_1^{(1)} := \mathbb{E}[\mu_1^*(X) \mid T=1]$. Taking expectations over
$X \sim P(X \mid T=1)$:
\begin{equation}
    \mathbb{E}_{P(X\mid T=1)}\bigl[\hat{\mu}_1(X)\bigr]
    \;=\; \eta_1^{(1)}\,\bar{\mu}_1^{(1)} + \bigl(1-\eta_1^{(1)}\bigr)\,\bar{\mu}_1^{(1)}
    \;=\; \bar{\mu}_1^{(1)}.
    \label{eq:mu1_in}
\end{equation}

%where $\eta_1^{(1)} \in [0,1]$ is the in-distribution %slope parameter and
%$\bar{\mu}_1^{(1)} = \mathbb{E}[\mu_1^*(X) \mid T=1]$ is %the training-set mean.
Now consider the predicted potential outcome $\hat{\mu}_1(X)$ evaluated on a unit
drawn from the control arm, $X \sim P(X \mid T=0)$. This is the out-of-distribution
(OOD) setting for $\hat{\mu}_1$, since i) the model was trained exclusively on
$\{( {X}_i, Y_i) \mid T_i = 1\}_{i=1}^n$ and is now applied to covariate
regions underrepresented or absent in its training support, and ii) $p$ is high-dimensional. Under the ML/AI
regression model with systematic prediction bias (SPB), the expected prediction
given the true conditional mean $\mu_1^*(X)$ takes the form
%\begin{equation}
%    \mathbb{E}\bigl[\hat{\mu}_1(X) \mid \mu_1^*(X),\, %T=0\bigr]
%    = \eta_1^{(0)}\,\mu_1^*(X) + \bigl(1-%\eta_1^{(0)}\bigr)\,\bar{\mu}_1^{(\alpha)},
%    \qquad X \sim P(X \mid T=0),
%    \label{eq:mu0}
%\end{equation}
%where $\eta_1^{(0)} \in [0,1]$ is the out-of-distribution %slope parameter and
%the shrinkage target remains $\bar{\mu}_1^{(1)} = %\mathbb{E}[\mu_1^*(X) \mid T=1]$,
%the training-set mean of the treated arm. 
\begin{equation*}
    \mathbb{E}\bigl[\hat{\mu}_1(X) \mid \mu_1^*(X),\, T=0\bigr]
    \;=\; \eta_1^{(0)}\,\mu_1^*(X) + \bigl(1-\eta_1^{(0)}\bigr)\,\bar{\mu}_1^{({\alpha}_1)},
    \qquad X \sim P(X \mid T=0),
\end{equation*}
so that taking expectations over $X \sim P(X \mid T=0)$:
\begin{equation}
    \mathbb{E}_{P(X\mid T=0)}\bigl[\hat{\mu}_1(X)\bigr]
    \;=\; \eta_1^{(0)}\,\bar{\mu}_1^{(0)} + \bigl(1-\eta_1^{(0)}\bigr)\,\bar{\mu}_1^{({\alpha}_1)},
    \label{eq:mu1_ood}
\end{equation}
where $\eta_1^{(0)} \in [0,1]$ is the out-of-distribution slope parameter and
the shrinkage target remains $\bar{\mu}_1^{(1)} = \mathbb{E}[\mu_1^*(X) \mid T=1]$,
the training-set mean of the treated arm, while $\bar{\mu}_1^{(0)} := \mathbb{E}[\mu_1^*(X) \mid T=0]$.

The model has sparse coverage of $P(X \mid T=0)$ and therefore collapses more
aggressively toward $\bar{\mu}_1^{(1)}$ for control-arm units than for treated-arm
units. Here, we parametrize the shrinkage target as $\bar{\mu}_1^{({\alpha}_1)} = w\,\bar{\mu}_1^{(1)} + (1-w)\,\bar{\mu}_1^{(0)}$
for some $w \in (0,1)$, where $w$ reflects the degree to which the model extrapolates
toward the training-set mean of the treated arm.

\begin{theorem}
For each treatment arm $t \in \{0,1\}$, let $\eta_t^{(t)} \in [0,1]$
denote the in-distribution slope parameter and $\eta_t^{(1-t)} \in [0,1]$ denote the
out-of-distribution slope parameter evaluated on the opposite arm. Let $\pi = \mbox{P}(T=1)$. Then the bias of the
estimated ATE via the outcome regression estimators \eqref{eq:mu1_in} and \eqref{eq:mu1_ood} is
\begin{align}
    \mathrm{Bias}_{\mathrm{SPB}}
    &\;:=\;
    \Bigl(\mathbb{E}_{P(X)}\bigl[\hat{\mu}_1(X)\bigr] - \mathbb{E}_{P(X)}\bigl[\mu_1^*(X)\bigr]\Bigr)
    \;-\;
    \Bigl(\mathbb{E}_{P(X)}\bigl[\hat{\mu}_0(X)\bigr] - \mathbb{E}_{P(X)}\bigl[\mu_0^*(X)\bigr]\Bigr)
    \nonumber\\[6pt]
    &\;=\;
    (1-\pi)\bigl(1-\eta_1^{(0)}\bigr)\,w_1\!\left(\bar{\mu}_1^{(1)}-\bar{\mu}_1^{(0)}\right)
    \;+\;
    \pi\bigl(1-\eta_0^{(1)}\bigr)\,w_0\!\left(\bar{\mu}_0^{(1)}-\bar{\mu}_0^{(0)}\right).
\end{align}
\label{thm:bias_spb}
\end{theorem}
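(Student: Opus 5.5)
The plan is a direct computation: decompose each population expectation by the law of total expectation over the treatment strata, substitute the in-distribution identity \eqref{eq:mu1_in} and the out-of-distribution identity \eqref{eq:mu1_ood}, and simplify. No limiting argument is needed, because everything is stated at the level of expectations.

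\textbf{Step 1 (treated-arm model).} Since $P(X)=\pi P(X\mid T=1)+(1-\pi)P(X\mid T=0)$, we have
\[
\mathbb{E}_{P(X)}[\hat\mu_1(X)]=\pi\,\mathbb{E}_{P(X\mid T=1)}[\hat\mu_1(X)]+(1-\pi)\,\mathbb{E}_{P(X\mid T=0)}[\hat\mu_1(X)],
\]
and the same decomposition gives $\mathbb{E}_{P(X)}[\mu_1^*(X)]=\pi\bar\mu_1^{(1)}+(1-\pi)\bar\mu_1^{(0)}$. By \eqref{eq:mu1_in} the treated-stratum term of $\hat\mu_1$ equals $\bar\mu_1^{(1)}$, so it cancels exactly. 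This is the point that in-distribution shrinkage is mean-preserving.

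For the control stratum, \eqref{eq:mu1_ood} together with $\bar\mu_1^{(\alpha_1)}=w_1\bar\mu_1^{(1)}+(1-w_1)\bar\mu_1^{(0)}$ gives
\[
\eta_1^{(0)}\bar\mu_1^{(0)}+(1-\eta_1^{(0)})\bigl(w_1\bar\mu_1^{(1)}+(1-w_1)\bar\mu_1^{(0)}\bigr).
\]
Subtracting $\bar\mu_1^{(0)}$ leaves $(1-\eta_1^{(0)})w_1(\bar\mu_1^{(1)}-\bar\mu_1^{(0)})$. Hence the treated-arm bias is $(1-\pi)(1-\eta_1^{(0)})w_1(\bar\mu_1^{(1)}-\bar\mu_1^{(0)})$.

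\textbf{Step 2 (control-arm model).} The statement only writes the treated-arm displays, so I first spell out the symmetric assumptions for $\hat\mu_0$, trained on $\{T_i=0\}$:
\[
\mathbb{E}_{P(X\mid T=0)}[\hat\mu_0]=\bar\mu_0^{(0)},
\qquad
\mathbb{E}_{P(X\mid T=1)}[\hat\mu_0]=\eta_0^{(1)}\bar\mu_0^{(1)}+(1-\eta_0^{(1)})\bar\mu_0^{(\alpha_0)}.
\]
The shrinkage target is now pulled toward the \emph{control} training mean: $\bar\mu_0^{(\alpha_0)}=w_0\bar\mu_0^{(0)}+(1-w_0)\bar\mu_0^{(1)}$.

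Repeating Step 1, the control-stratum term cancels, and the treated stratum (weight $\pi$) contributes
\[
\pi(1-\eta_0^{(1)})w_0\bigl(\bar\mu_0^{(0)}-\bar\mu_0^{(1)}\bigr).
\]

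\textbf{Step 3 (combine).} Subtracting the Step 2 bias from the Step 1 bias flips the sign of the second term. This yields exactly
\[
(1-\pi)(1-\eta_1^{(0)})w_1\bigl(\bar\mu_1^{(1)}-\bar\mu_1^{(0)}\bigr)+\pi(1-\eta_0^{(1)})w_0\bigl(\bar\mu_0^{(1)}-\bar\mu_0^{(0)}\bigr),
\]
as stated in Theorem~\ref{thm:bias_spb}.

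The main obstacle is not algebraic but one of specification. The paper states the OOD model and the parametrization of $\bar\mu^{(\alpha)}$ only for $t=1$, writing a single $w$, while the theorem uses arm-specific $w_0,w_1$. I must fix the convention for arm $0$ so that $w_0$ weights the control-arm (training) mean. With the opposite convention the second term would carry $(1-w_0)$ and a reversed sign, so I would state this convention explicitly before computing.
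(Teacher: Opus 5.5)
Your proposal is correct and follows the same route as the paper's proof. Like the paper, you decompose each $\mathbb{E}_{P(X)}[\hat\mu_t(X)]$ and $\mathbb{E}_{P(X)}[\mu_t^*(X)]$ over the strata $T=1$ and $T=0$, cancel the in-distribution term, and isolate the out-of-distribution factor $(1-\eta_t^{(1-t)})$ times the arm-specific weight. The convention you fix for arm $0$, $\bar\mu_0^{(\alpha_0)}=w_0\bar\mu_0^{(0)}+(1-w_0)\bar\mu_0^{(1)}$, is exactly the one the paper adopts in its appendix proof, so your signs match the stated formula.
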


\subsection{From Unbiased ML/AI Regression to Unbiased Causal Inference}
\label{sec:umlr}
\textbf{Unbiased ML/AI Regression via Constrained Objective Functions.}
We adopt the mean-anchoring constrained objective function to eliminate SPB at the model-training stage \citep{lee2025systematic}. Specifically, we have
:
\begin{equation}
  \min_{f}\;\sum_{i=1}^{n} \ell\bigl(f(X_i), Y_i\bigr) + \lambda\,\Omega(f)
  \quad\text{s.t.}\quad
  \sum_{i \in \mathcal{R}_1}\bigl(f(X_i) - Y_i\bigr) = 0
  \;\;\text{and}\;\;
  \sum_{i \in \mathcal{R}_2}\bigl(f(X_i) - Y_i\bigr) = 0,
  \label{eq:obj}
\end{equation}

where $\Omega$ is a regularizer with tuning parameter $\lambda$. The two index sets
$\mathcal{R}_1$ and $\mathcal{R}_2$ are disjoint and complementary, partitioning the
sample of $Y$, with $\mathcal{R}_1 \cup \mathcal{R}_2 = \{1,\ldots,n\}$. For example,
using the sample mean as the cut-off gives $\mathcal{R}_1 = \{i : Y_i \leq \bar{Y}\}$
and $\mathcal{R}_2 = \{i : Y_i > \bar{Y}\}$. Imposing the two mean-anchoring constraints
enforces $\eta = 1$, i.e., $\mathbb{E}(\widetilde{Y}_i \mid Y_i = y^*) = y^*$ for all
$y^*$, eliminating SPB across all ML/AI regression model classes, including tree-based
algorithms, kernel methods, and regularized regression \citep{lee2025systematic}.

Let $\hat{\mu}_t^{U}(X)$ denote the potential outcome estimated by a UMLR model
trained under the constrained objective function \eqref{eq:obj}.  Then, we  define the UMLR based   ATE   as
\begin{equation}
  \widehat{\Delta}^U = \frac{1}{n}\sum_{i=1}^n \bigl[\hat{\mu}^U_1(X_i) - \hat{\mu}^U_0(X_i)\bigr].
  \label{eq:ate}
\end{equation}

\textbf{Unbiased ML regression-based Causal Inference Model.} Because SPB is eliminated at the model-fitting stage, yielding $\eta=1$ for every ML/AI regression model, the systematic bias terms derived in Section~\ref{sec:biased-ate} vanish exactly. Consequently, the resulting ATE estimator is unbiased regardless of the choice of causal inference model, as established in Theorem~\ref{thm:main}.

% \tred{[HL:] in theorem below using $f^\ast$ can make confusion, can we use $\hat{\mu}$}
% \begin{theorem} 
% \label{thm:main}  
% Suppose UMLR $f^*$ is the solution to \eqref{eq:obj} satisfying the two constraints. Then the ATE estimator constructed by UMLR  is unbiased, $ \mathbb{E}(\widehat{\Delta}^U - \Delta)=0$ .  
% \end{theorem}

\begin{theorem} 
\label{thm:main}  
For each level of treatment $T=\{0,1\}$, the outcome learner $\hat{\mu}_t^U$ is trained under the constrained objective function
\eqref{eq:obj}. Then the UMLR based ATE estimate in \eqref{eq:ate} is unbiased: $\mathbb{E}(\widehat{\Delta}^U - \Delta)=0$.
\end{theorem}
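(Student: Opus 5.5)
The plan is to reduce Theorem~\ref{thm:main} to Theorem~\ref{thm:bias_spb}: show that training each arm-specific learner under \eqref{eq:obj} forces every slope parameter appearing in the bias formula to equal one. Then $\mathrm{Bias}_{\mathrm{SPB}}=0$, and unbiasedness of $\widehat{\Delta}^U$ follows by linearity of expectation.

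\textbf{Step 1 (in-distribution slopes).} Fix $t$ and consider the learner $\hat{\mu}_t^U$ fit on $\{(X_i,Y_i):T_i=t\}$. The two mean-anchoring constraints in \eqref{eq:obj} say that the average residual is zero on $\mathcal{R}_1$ and on $\mathcal{R}_2$ separately. Under the linear SPB model \eqref{eq:spb-linear}, the conditional expectation of the prediction given $Y=y^*$ is an affine function $a+\eta y^*$.
\begin{itemize}
\item Summing the two constraints gives overall mean anchoring, which fixes the intercept.
\item Subtracting them gives $\eta(\bar Y_{\mathcal{R}_2}-\bar Y_{\mathcal{R}_1}) = \bar Y_{\mathcal{R}_2}-\bar Y_{\mathcal{R}_1}$.
\item The two group means differ by construction of the partition, so $\eta_t^{(t)}=1$.
\end{itemize}
This is exactly the claim cited from \citet{lee2025systematic}. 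I will state it as a lemma, obtained by taking expectations of the constraint equations conditional on the partition.

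\textbf{Step 2 (out-of-distribution slopes).} I need $\eta_t^{(1-t)}=1$, so that the terms $(1-\eta_1^{(0)})w_1(\cdot)$ and $(1-\eta_0^{(1)})w_0(\cdot)$ in Theorem~\ref{thm:bias_spb} vanish. The argument is that SPB is a property of the fitted map $f$, namely its shrinkage relative to $\mu_t^*$. Once the constrained objective removes the shrinkage, we have $\mathbb{E}[\hat{\mu}_t^U(X)\mid \mu_t^*(X)]=\mu_t^*(X)$ as a function of $X$, and this identity can be evaluated on either covariate distribution $P(X\mid T=s)$. With slope one, the shrinkage target $\bar{\mu}_t^{(\alpha_t)}$ receives weight $1-\eta=0$, so its dependence on $w$ disappears.

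\textbf{Step 3 (conclude).} Plugging $\eta=1$ into \eqref{eq:mu1_in} and \eqref{eq:mu1_ood} gives $\mathbb{E}_{P(X\mid T=s)}[\hat{\mu}_t^U(X)]=\bar{\mu}_t^{(s)}$ for both $s$. Mixing over $s$ with weights $\pi$ and $1-\pi$ then gives $\mathbb{E}_{P(X)}[\hat{\mu}_t^U(X)]=\mathbb{E}[\mu_t^*(X)]$. Identification under consistency, unconfoundedness and positivity gives $\mathbb{E}[\mu_1^*(X)-\mu_0^*(X)]=\Delta$. Finally, $\mathbb{E}\widehat{\Delta}^U$ is the average of i.i.d.\ terms, so it equals $\mathbb{E}_{P(X)}[\hat{\mu}_1^U-\hat{\mu}_0^U]$, and hence $\mathbb{E}(\widehat{\Delta}^U-\Delta)=0$.

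The main obstacle is Step 2. The constraints in \eqref{eq:obj} bind only on the training arm, so nothing in the optimization directly controls extrapolation to $P(X\mid T=1-t)$. I will handle this by making explicit the modeling assumption implicit in the paper: OOD shrinkage arises only through the same shrinkage mechanism toward the training mean, so $\eta_t^{(1-t)}$ is a monotone function of $\eta_t^{(t)}$ equal to one when $\eta_t^{(t)}=1$. If that fails, the fallback is a weaker statement that the bias equals the Theorem~\ref{thm:bias_spb} expression with only the OOD factors remaining. A secondary point is that Step 1 yields $\eta=1$ only in expectation over the partition. To keep the result exact rather than asymptotic, I will treat $\mathcal{R}_1$ and $\mathcal{R}_2$ as fixed.
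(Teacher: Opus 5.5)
Your route is the paper's: reduce to Theorem~\ref{thm:bias_spb} by forcing the slopes to one. Your in-distribution step is sharper than the paper's. The paper only adds the two constraints, which is mere mean anchoring and already satisfied by a least-squares fit with a free intercept, and then asserts that $c$ in $f^*=\mathring{f}-c\tilde{f}$ must vanish. It is the \emph{difference} of the constraints that pins a slope. You are also right that everything hinges on Step~2: by \eqref{eq:mu1_in} the in-distribution slope never enters the bias, and \eqref{eq:obj} says nothing about $P(X\mid T=1-t)$. The paper gets past this by reading ``$\mathring{f}$ is an unbiased prediction of $Y_i$'' as $\mathbb{E}[f^*(X_i)\mid X_i,T_i]=\mu_t^*(X_i)$ at every $X_i$, which gives it all four slopes at once.

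The genuine gap is that Step~1 pins the wrong slope. The switch happens in your Step~2 sentence ``once the constrained objective removes the shrinkage, $\mathbb{E}[\hat{\mu}_t^U(X)\mid\mu_t^*(X)]=\mu_t^*(X)$''. The constraints calibrate $\hat{\mu}_t$ against $Y$ (the $\eta$ of \eqref{eq:spb-linear}), whereas $\eta_t^{(s)}$ in Theorem~\ref{thm:bias_spb} calibrates against $\mu_t^*(X)$, and with outcome noise these are incompatible. Within arm $t$, write $Y=\mu_t^*(X)+\varepsilon$ with $(\mu_t^*(X),Y)$ jointly Gaussian. Evaluate $\hat{\mu}_t$ at units whose $\varepsilon$ it has not seen (held-out or opposite-arm units), and set $\rho_t=\mathrm{Var}(\mu_t^*(X)\mid T=t)/\mathrm{Var}(Y\mid T=t)<1$. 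Then $\mathbb{E}[\mu_t^*(X)\mid Y,T=t]$ has slope $\rho_t$ in $Y$. A learner following the shrinkage model behind Theorem~\ref{thm:bias_spb} therefore has $Y$-slope $\eta_t^{(t)}\rho_t$, so your Step~1 forces $\eta_t^{(t)}=1/\rho_t>1$, outside the range $[0,1]$ the paper even allows. In particular, the oracle $\hat{\mu}_t=\mu_t^*$ has $Y$-slope $\rho_t<1$; this is the paper's own OLS remark that $\mathbb{E}(\widehat{Y}-Y\mid X)=0$ while $\mathbb{E}(\widehat{Y}-Y\mid Y)\neq 0$. If your link is strictly increasing, this gives $\eta_t^{(1-t)}>1$, and Theorem~\ref{thm:bias_spb} returns a nonzero bias whose terms have the opposite sign to the shrinkage case. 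The argument closes only if you assume noiseless outcomes, or assume outright that $\mathbb{E}[\hat{\mu}_t^U(X)\mid X]=\mu_t^*(X)$ on $P(X\mid T=1-t)$. The latter is the paper's move, but then \eqref{eq:obj} and your Step~1 do no work and the theorem just restates the assumption. Two smaller repairs are also needed. First, the summands of $\widehat{\Delta}^U$ share the fitted functions, so they are exchangeable, not i.i.d., and each $X_i$ is a training point for one learner rather than a fresh draw as in \eqref{eq:mu1_in}--\eqref{eq:mu1_ood}. Second, $\mathcal{R}_1,\mathcal{R}_2$ are functions of $Y$, so fixing them requires the SPB model conditional on all of $Y_1,\dots,Y_n$, which is stronger than \eqref{eq:spb-linear}.
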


% The proof follows directly from Theorem~1 in \citet{lee2025systematic}: the two mean-anchoring
% constraints together enforce $\eta = 1$, removing the shrinkage that underlies SPB.

% \textbf{Summary.}
% The unbiasedness results establish a unified guarantee: replacing the ML/AI outcome regression component in a causal inference pipeline with UMLR models trained under the constrained objective in  \eqref{eq:obj} yields an unbiased ATE estimator. This bias suppression is not model-specific, but follows directly from eliminating SPB in outcome prediction. Consequently, the result applies broadly to S-, T-,   X-learners,  doubly robust estimators, and others. The application of UMLR to causal inference further underscores the importance of \textit{trustworthy} AI/ML for RWD analysis.

%-----------------------------------------------------------------------
\section{Simulation Studies}
\label{sec:sim}
%-----------------------------------------------------------------------

We conducted simulation studies to assess the finite-sample performance of ML/AI regression (MLR) and UMLR and to quantify the impact of systematic prediction bias on causal effect estimation.

\textbf{Setup.} The covariate matrix $X \in \mathbb{R}^{n \times p}$ was generated independently from a standard normal distribution, $X_{ij} \overset{\text{iid}}{\sim} \mathcal{N}(0, 1)$, for $i = 1, \ldots, n$ and $j = 1, \ldots, p$. Treatment $T_i \in \{0, 1\}$ wss assigned according to a logistic propensity model,
\begin{equation*}
    T_i \mid X_i \sim \text{Bernoulli}\big(e(X_i)\big), \quad e(X_i) = \sigma(X_i^\top \boldsymbol{\gamma}),
\end{equation*}
where $\sigma(\cdot)$ denotes the logistic function and $\boldsymbol{\gamma} \in \mathbb{R}^p$ is a sparse vector with $s$ active components. The potential outcomes were generated as
\begin{equation*}
    Y_i(1) = \mu_1 + X_i^\top \boldsymbol{\beta}_1 + \varepsilon_i, \quad Y_i(0) = \mu_0 + X_i^\top \boldsymbol{\beta}_0 + \varepsilon_i,
\end{equation*}
where $\varepsilon_i \overset{\text{iid}}{\sim} \mathcal{N}(0,1)$, and $\boldsymbol{\beta}_1, \boldsymbol{\beta}_0 \in \mathbb{R}^p$ are sparse coefficient vectors each with $s$ active components. The observed outcome is $Y_i = T_i Y_i(1) + (1 - T_i)Y_i(0)$. The individual-level conditional average treatment effect (CATE) is defined as
\begin{equation*}
    \tau(X_i) = Y_i(1) - Y_i(0),
\end{equation*}
and the true ATE in each replication is computed as
\begin{equation*}
\Delta
=
\mathbb{E}\!\left[\tau(X)\right]
=
\frac{1}{n}\sum_{i=1}^{n}\tau(X_i).
\end{equation*}
We consider three settings varying $n$ and $p$:
$(n, p) \in \{(500, 200),\ (1{,}000, 200),\ (1{,}000, 500)\}$.

% \tpurple{We repeated each experiment over 200 Monte Carlo replications. Three $(n, p)$ configurations were considered: $(500, 200)$ as the baseline setting, $(500, 500)$ to assess higher dimensionality, and $(1000, 200)$ to evaluate larger sample sizes. We further generated data under varying levels of SBP, induced through shrinkage parameters $\eta \in (0,1)$. We implement S-, T-, and X-learners using both MLR and the proposed UMLR as outcome models. We evaluated estimator performance across different methods, using the Randomized Controlled Trial (RCT) estimator as a gold-standard benchmark. For each scenario, we assessed the bias, standard deviation, and coverage probability of the estimated ATE across varying degrees of shrinkage.}

%\paragraph{SPB Scenarios.}
%We consider two shrinkage configurations to examine how prediction bias propagates through each learner, and how systematic shrinkage leads to underestimation or overestimation of the ATE.

%\begin{itemize}
%    \item \textbf{Uniform shrinkage.} We first consider a uniform shrinkage setting in which
%    \[
%    \eta_s = \eta_{t1} = \eta_{t0} = \eta_{x1} = \eta_{x0} = 0.6,
%    \]
%    so that all learners are equally affected by prediction bias.

%    \item \textbf{Heterogeneous shrinkage.} Shrinkage varies across treatment arms and estimation stages,
 %   \[
 %   \eta_s = 0.6, \quad \eta_{t1} = 0.8, \quad \eta_{t0} = 0.6, \quad \eta_{x1} = 0.8, \quad \eta_{x0} = 0.6,
 %   \]
 %   so that the arms and the X-learner second-stage models face different shrinkage, inducing asymmetric bias across learners.
%\end{itemize}
 
 \textbf{Results.} The simulation results are summarized in Table~\ref{tab:table1}. Across all settings, the UMLR-based causal effect estimator outperforms MLR-based causal inference methods by substantially mitigating systematic prediction bias. More importantly, the 95\% confidence interval coverage of the MLR-based methods approaches zero, creating a serious risk of misleading real-world evidence-based decisions. In contrast, the 95\% confidence interval coverage of the UMLR-based methods approaches one.

\begin{table}[H]
\centering
\caption{%
Simulation results comparing the performance of MLR and UMLR-based causal inference under systematic prediction bias.  }
\label{tab:table1}
\begin{threeparttable}
\setlength{\tabcolsep}{10pt}
\renewcommand{\arraystretch}{1.22}
\begin{tabular}{l ccc ccc ccc}
  \toprule
  & \multicolumn{3}{c}{$n{=}500,\;p{=}200$}
  & \multicolumn{3}{c}{$n{=}1{,}000,\;p{=}200$}
  & \multicolumn{3}{c}{$n{=}1{,}000,\;p{=}500$} \\
  \cmidrule(lr){2-4}\cmidrule(lr){5-7}\cmidrule(lr){8-10}
  Method 
    & \makecell{Bias (\%)} & RMSE & Cov.
    & \makecell{Bias (\%)} & RMSE & Cov.
    & \makecell{Bias (\%)} & RMSE & Cov. \\
  \midrule
  MLR 
    & -19.96 & 2.96 & 0.02 
    & -12.02 & 1.97  & 0.01          
    & -22.19 & 3.25  & 0.00  \\
  UMLR 
    & -4.59  & 1.93 & 0.81
    & -0.82  & 1.38 & 1.00       
    & -4.42  & 1.99 & 0.71 \\
  \bottomrule
\end{tabular}
\begin{tablenotes}[flushleft]\footnotesize
  \item Bias(\%): mean absolute bias as \% of true ATE; Cov.: empirical 95\% CI coverage.
\end{tablenotes}
\end{threeparttable}
\end{table}

Figure~\ref{fig:fig2} further illustrates the differences in ATE estimation and SPB across regression models. Figure~\ref{fig:fig2}(b) explains the bias discrepancy observed in Figure~\ref{fig:fig2}(a): under MLR, the estimated SPB slopes are $\hat{\eta}_{1}^{(0)} = 0.51$ and $\hat{\eta}_{0}^{(1)} = 0.53$ for the $Y(1)$ and $Y(0)$ counterfactual arms, respectively. These values indicate that the model substantially shrinks its predictions toward the training-set mean when predicting potential outcomes for the opposite treatment arm. UMLR largely mitigates this shrinkage, increasing the out-of-distribution slopes to $\hat{\eta}_{1}^{(0)} = 0.85$ and $\hat{\eta}_{0}^{(1)} = 0.92$, thereby approaching the unbiased ideal of $\eta_{t}^{(s)} = 1$.

These findings are consistent with the theoretical results established in Theorems~\ref{thm:bias_spb} and~\ref{thm:main}, which show that UMLR substantially reduces SPB in outcome prediction, thereby mitigating ATE estimation bias and improving the accuracy of statistical inference.

% {\color{purple}
% Table~\ref{tab:table1} summarizes performance across varying sample sizes $n$ and dimensionality $p$. Overall, UMLR   outperforms MLR for all settings, achieving smaller bias and better coverage.}
% {\color{purple}
% \textbf{Results: } 
% Figure~\ref{fig:fig2}(a) shows that MLR produces a noticeably biased
% ATE estimator, whereas UMLR substantially reduces the bias, with its distribution centered much closer to zero. Figure~\ref{fig:fig2}(b) further explains the bias reduction observed in Figure~\ref{fig:fig2}(a). Under MLR, the OOD slope parameters are $\hat{\eta}_{1}^{(0)} = 0.51$ and
% $\hat{\eta}_{0}^{(1)} = 0.53$ for the $Y(1)$ and $Y(0)$ counterfactual arms
% respectively, indicating that the model shrinks its predictions substantially
% toward the training-set mean when extrapolating to the opposite arm. UMLR
% largely mitigates this shrinkage, increasing the OOD slopes to $\hat{\eta}_{1}^{(0)} = 0.85$
% and $\hat{\eta}_{0}^{(1)} = 0.92$, closer to the ideal $\eta_{t}^{(s)} = 1$. These results demonstrate that the proposed UMLR framework effectively mitigates systematic prediction bias in ATE estimation.
% }
\begin{figure}[htbp]
    \centering
    \includegraphics[width=1\linewidth]{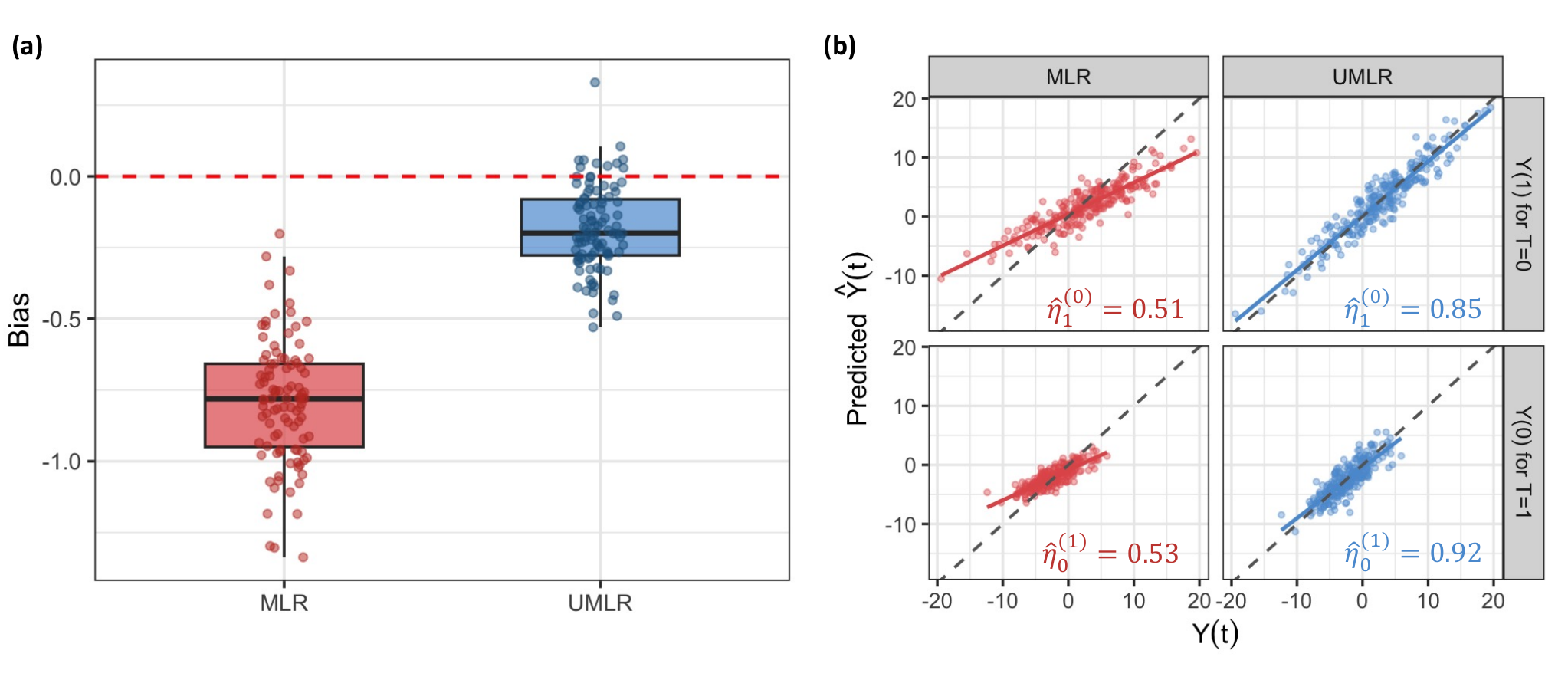}
\caption{Bias in estimated ATE for MLR and UMLR ($n = 500$, $p = 200$):
\textbf{(a)} Distribution of ATE bias across 100 Monte Carlo simulations.
\textbf{(b)} Scatter plots for $\hat{\eta}_{1}^{(0)}$ and
$\hat{\eta}_{0}^{(1)}$ by MLR (red) and UMLR (blue).}
    \label{fig:fig2}
\end{figure}

%-----------------------------------------------------------------------
\section{RWD Application: Opioid Therapy and Blood Pressure in Chronic Pain }
\label{sec:app}
%-----------------------------------------------------------------------

%We applied the proposed framework to study the ATE of opioid use on systolic blood pressure (SBP) in patients with chronic pain using data from the UK Biobank. The UK Biobank is a large-scale prospective cohort comprising approximately 500,000 participants aged 40--69 at recruitment, with longitudinal follow-up including hospital episode statistics, primary care records, and repeated assessment-center measurements. Patients with a diagnosis of chronic pain and complete baseline covariate data were included, resulting in a cohort of $n = \text{19,736}$ participants, of whom $523$ received opioid therapy (treatment) and $19{,}213$ did not receive any medication (control).

%Covariates included demographics, comorbidities, concurrent medications, baseline blood test biomarkers, and socioeconomic indices, with a total of $p = 35$ features (see Appendix). The ATE was estimated using MLR and UMLR embedded within S-, T-, and X-learners. Propensity Score Matching (PSM) was additionally included as a benchmark estimator.

We applied the proposed framework to estimate the ATE of opioid use on systolic blood
pressure among participants with chronic pain using UK Biobank data. Chronic pain was
defined as self-reported pain lasting more than three months in any of seven body regions
(headaches, facial pain, neck/shoulder, back, stomach/abdominal, hip, and knee) at
baseline. Opioid use was determined from self-reported current medications at the imaging
visit, classified using Anatomical Therapeutic Chemical (ATC) codes
\citep{gao2024association}. The primary outcome was the mean of two systolic blood
pressure measurements at the imaging visit, with fewer than two measurements treated as
missing. The final analytic sample comprised 19,736 participants with chronic pain, of
whom 523 were opioid users and 19,213 were non-users. Covariates included age, sex, 
baseline blood biomarkers and others, yielding $p = 35$ variables in total (see Appendix~Table~\ref{tab:descriptive}). The ATE
was estimated using MLR and UMLR incorporated into S-, T-, and X-learners \citep{kunzel2019metalearners}.
  Propensity score matching (PSM) was also applied as a benchmark, because it does not rely on ML/AI regression-predicted outcomes and is therefore not affected by SPB \citep{rosenbaum1983central}. Following matching, confounding variables were well balanced between the two treatment arms (see Appendix~A.4), and therefore the average treatment effect on the treated (ATT) estimated by PSM was used as the reference, while noting that the ATT estimand differs from the ATE and the sample size for PSM is smaller. 

 {As demonstrated in Figure~\ref{fig:fig3}(a), both UMLR-based and conventional
MLR-based causal inference models indicate that opioid treatment for pain management
significantly reduces systolic blood pressure relative to no medication (all $p < 0.001$),
consistent with previous studies \citep{sacco2013relationship, qin2026chronic}. However,
the estimated ATEs differ substantially: UMLR-based estimators yield larger negative
estimates of $-4.72$, $-4.94$, and $-5.29$ mmHg for the S-, T-, and X-learners,
respectively, closely aligned with the PSM benchmark of $-5.08$ mmHg, whereas
MLR-based estimators produce substantially attenuated estimates of $-2.41$, $-3.83$,
and $-2.36$ mmHg. This attenuation is consistent with the SPB-induced bias
characterized in Section~\ref{sec:biased-ate}. As shown in Figure~\ref{fig:fig3}(b), conventional MLR exhibits substantial SPB, with $\widehat{\eta}_S = 0.409$ and $\widehat{\eta}_T = 0.410$ on the training set, and $\widehat{\eta}_S = 0.148$ and $\widehat{\eta}_T = 0.147$ on the test set. In contrast, UMLR achieves near-zero SPB, with $\widehat{\eta}_S = 0.922$ and $\widehat{\eta}_T = 0.923$ on the training set, and $\widehat{\eta}_S = 0.903$ and $\widehat{\eta}_T = 0.904$ on the test set. The discrepancy in ATE estimates between MLR- and UMLR-based causal inference models is therefore likely driven by SPB-induced bias. 

\begin{figure}[ht]
  \centering
  \includegraphics[width=1\linewidth]{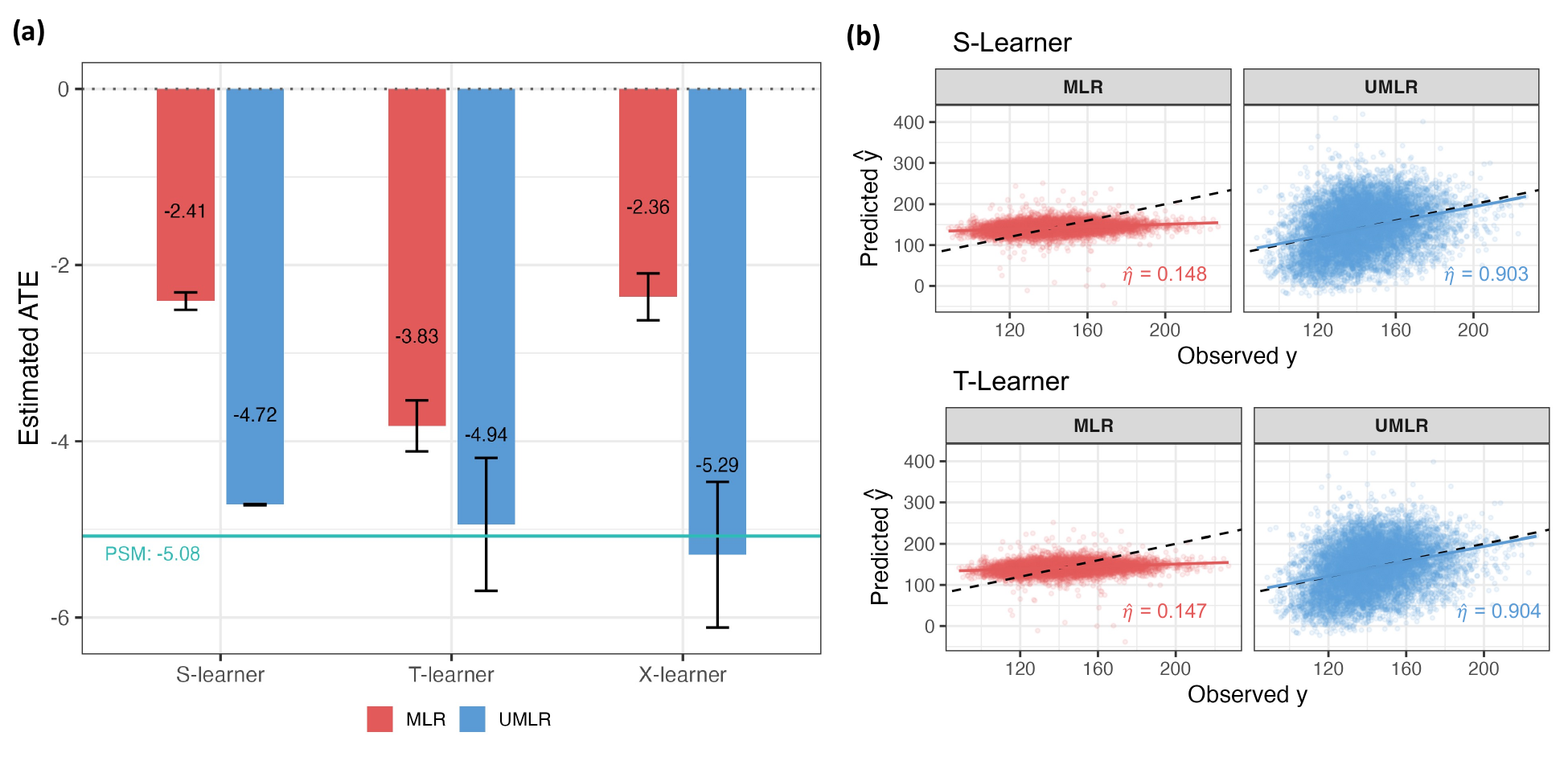}
  \caption{Results of real data application on mean systolic blood
pressure under opioid treatment vs. no-medication.
  \textbf{(a)} ATE estimates with 95\% confidence intervals. %\tbrown{there is very narrow confidence interval for UMLR for S-learner. Double check if there is any error. also, we need to be careful about CI for our method, where the asymptotic theory might be different from existing CATE. better use bootstrap as the way to do inference to avoid critique} across MLR- and UMLR-based meta-learners. Confidence intervals are constructed from the CATE, with the PSM estimate as a reference line.  
  \textbf{(b)} Predicted vs. observed systolic blood
pressure for the S-learner and T-Learner under MLR and UMLR for the testing datasets. The dashed line denotes the prediction without systematic bias. (i.e.,  $\eta=1$).
%\tbrown{we need to mention whether (b) is evaluated at the test dataset (i guess so but important to mention)}
} 
  \label{fig:fig3}
\end{figure}
%-----------------------------------------------------------------------
\section{Discussion}
\label{sec:discussion}
%-----------------------------------------------------------------------
 Despite the advancement in ML/AI, a structural vulnerability still remains in current ML/AI regression tasks and its application in RWE-based TTE analysis: systematic prediction bias, widely seen in standard ML/AI regression algorithms, propagates into biased ATE estimates for regression-based causal machine learning methods in ways that may not be corrected by cross-fitting, ensemble methods, or model selection. We have proposed a principled remedy in the form of UMLR, which mitigate SPB at the model-fitting stage. It has been shown that the resulting UMLR-based causal inference framework yields unbiased ATE estimates when SPB is completely corrected. In practice, UMLR often achieves $\hat{\eta}$ values close to 1, such as $\sim 0.9$, compared with $\hat{\eta} < 0.5$ under MLR. Therefore, even when SPB is not fully eliminated, UMLR can substantially reduce ATE estimation error.

Notably, outcome regression-based causal inference models are attractive in many applications. Compared with weighing-based approaches, outcome regression-based causal inference models can be less sensitive to instability arising from extreme weights, which may occur in settings with limited overlap (as in our UKB data application). In many practical scenarios, this can translate into improved finite-sample performance, including MSE, particularly when the outcome model is reasonably well-specified. Outcome regression-based models are especially convenient in settings with continuous treatments or exposures, where direct modeling of the outcome surface can be more straightforward than constructing and stabilizing generalized propensity scores.

Based on these findings, we recommend: i) routinely diagnosing and reporting SPB via $1-\widehat{\eta}$ for all potential outcome models, including both ML/AI regression and OLS; and ii) adopting UMLR-based causal inference  for unbiased ATE estimation in RWD-based TTE. The main limitations are a modest increase in prediction variance relative to unconstrained ML regression, as the cost of mitigating structural bias, and a currently limited set of implemented UMLR models, both of which will be addressed in future work. 

\section*{Data and Code Availability}
 This research was conducted using the UK Biobank Resource under application number 74376. 
The code used for analyses is available on GitHub at 
\url{https://github.com/effyifeixu/Unbiased-Causal}.

\section*{Acknowledgement}
The authors thank the UK Biobank for their efforts in data collection and curation, which made this research possible.

%-----------------------------------------------------------------------
\bibliographystyle{plainnat}
\bibliography{refs}

\section*{Appendix}
\subsection*{A.1 Proofs of Theoretical Results}
\textit{Theorem~\ref{thm:bias_spb}.} \textit{For each treatment arm $t \in \{0,1\}$, let $\eta_t^{(t)} \in [0,1]$
denote the in-distribution slope parameter and $\eta_t^{(1-t)} \in [0,1]$ denote the
out-of-distribution slope parameter evaluated on the opposite arm. Let $\pi = \mbox{P}(T=1)$. Then the bias of the
estimated ATE via the outcome regression estimators \eqref{eq:mu1_in} and \eqref{eq:mu1_ood} is
\begin{align*}
    \mathrm{Bias}_{\mathrm{SPB}}
    &\;:=\;
    \Bigl(\mathbb{E}_{P(X)}\bigl[\hat{\mu}_1(X)\bigr] - \mathbb{E}_{P(X)}\bigl[\mu_1^*(X)\bigr]\Bigr)
    \;-\;
    \Bigl(\mathbb{E}_{P(X)}\bigl[\hat{\mu}_0(X)\bigr] - \mathbb{E}_{P(X)}\bigl[\mu_0^*(X)\bigr]\Bigr)
    \nonumber\\[6pt]
    &\;=\;
    (1-\pi)\bigl(1-\eta_1^{(0)}\bigr)\,w_1\!\left(\bar{\mu}_1^{(1)}-\bar{\mu}_1^{(0)}\right)
    \;+\;
    \pi\bigl(1-\eta_0^{(1)}\bigr)\,w_0\!\left(\bar{\mu}_0^{(1)}-\bar{\mu}_0^{(0)}\right).
\end{align*}
}

\begin{proof}
Let $\pi = P(T=1)$. The true ATE is
\begin{align*}
    \Delta
    &\;:=\; \mathbb{E}_{P(X)}\bigl[\mu_1^*(X)\bigr] - \mathbb{E}_{P(X)}\bigl[\mu_0^*(X)\bigr],
\end{align*}
where, by the law of total expectation,
\begin{align*}
    \mathbb{E}_{P(X)}\bigl[\mu_1^*(X)\bigr]
    &\;=\; \pi\,\bar{\mu}_1^{(1)} + (1-\pi)\,\bar{\mu}_1^{(0)}, \\[4pt]
    \mathbb{E}_{P(X)}\bigl[\mu_0^*(X)\bigr]
    &\;=\; \pi\,\bar{\mu}_0^{(1)} + (1-\pi)\,\bar{\mu}_0^{(0)},
\end{align*}
where $\bar{\mu}_t^{(s)} := \mathbb{E}[\mu_t^*(X) \mid T=s]$ denotes the true conditional
mean of arm $t$ evaluated on the stratum $T=s$. Under the SPB, since the outcome
models $\hat{\mu}_1$ and $\hat{\mu}_0$ are each trained on a single arm and evaluated out-of-distribution on the opposite arm. We compute the bias of each arm separately.We parametrize the shrinkage targets as
\begin{align*}
    \bar{\mu}_1^{(\alpha_1)} &\;=\; w_1\,\bar{\mu}_1^{(1)} + (1-w_1)\,\bar{\mu}_1^{(0)}, \\[4pt]
    \bar{\mu}_0^{(\alpha_0)} &\;=\; w_0\,\bar{\mu}_0^{(0)} + (1-w_0)\,\bar{\mu}_0^{(1)},
\end{align*}
where $w_1, w_0 \in (0,1)$ are as defined in Section~\ref{sec:biased-ate}.
%where $w_1$ reflects the degree to which $\hat{\mu}_1$ %extrapolates toward the
%training-set mean of the treated arm, and $w_0$ reflects %the degree to which
%$\hat{\mu}_0$ extrapolates toward the training-set mean of %the control arm.

\textbf{Treated arm.}
We first consider $\hat{\mu}_1$, trained on the treated subsample. By the law of total
expectation, conditioning on $T$:
\begin{align*}
    \mathbb{E}_{P(X)}\bigl[\hat{\mu}_1(X)\bigr]
    &= \pi\,\mathbb{E}_{P(X\mid T=1)}\bigl[\hat{\mu}_1(X)\bigr]
     + (1-\pi)\,\mathbb{E}_{P(X\mid T=0)}\bigl[\hat{\mu}_1(X)\bigr].
\end{align*}
Substituting the in-distribution result \eqref{eq:mu1_in} and the OOD result
\eqref{eq:mu1_ood}:
\begin{align*}
    \mathbb{E}_{P(X)}\bigl[\hat{\mu}_1(X)\bigr]
    &= \pi\,\bar{\mu}_1^{(1)}
     + (1-\pi)\Bigl[\eta_1^{(0)}\,\bar{\mu}_1^{(0)}
     + \bigl(1-\eta_1^{(0)}\bigr)\,\bar{\mu}_1^{({\alpha}_1)}\Bigr] \nonumber\\[4pt]
    &= \pi\,\bar{\mu}_1^{(1)}
     + (1-\pi)\,\eta_1^{(0)}\,\bar{\mu}_1^{(0)}
     + (1-\pi)\bigl(1-\eta_1^{(0)}\bigr)\,\bar{\mu}_1^{({\alpha}_1)}.
\end{align*}
Subtracting the true marginal mean
$\mathbb{E}_{P(X)}[\mu_1^*(X)] = \pi\,\bar{\mu}_1^{(1)} + (1-\pi)\,\bar{\mu}_1^{(0)}$:
\begin{align}
    \mathbb{E}_{P(X)}\bigl[\hat{\mu}_1(X)\bigr] - \mathbb{E}_{P(X)}\bigl[\mu_1^*(X)\bigr]
    &= \pi\,\bar{\mu}_1^{(1)}
     + (1-\pi)\,\eta_1^{(0)}\,\bar{\mu}_1^{(0)}
     + (1-\pi)\bigl(1-\eta_1^{(0)}\bigr)\,\bar{\mu}_1^{({\alpha}_1)} \nonumber\\
    &\quad - \pi\,\bar{\mu}_1^{(1)} - (1-\pi)\,\bar{\mu}_1^{(0)} \nonumber\\[4pt]
    %&= (1-\pi)\,\eta_1^{(0)}\,\bar{\mu}_1^{(0)}
    % + (1-\pi)\bigl(1-%\eta_1^{(0)}\bigr)\,\bar{\mu}_1^{({\alpha}_1)}
    % - (1-\pi)\,\bar{\mu}_1^{(0)} \nonumber\\[4pt]
    %&= (1-\pi)\bigl(1-%\eta_1^{(0)}\bigr)\,\bar{\mu}_1^{({\alpha}_1)}
    % - (1-\pi)\bigl(1-%\eta_1^{(0)}\bigr)\,\bar{\mu}_1^{(0)} \nonumber\\[4pt]
    &= (1-\pi)\bigl(1-\eta_1^{(0)}\bigr)
       \Bigl(\bar{\mu}_1^{({\alpha}_1)} - \bar{\mu}_1^{(0)}\Bigr). \label{eq:bias_treated}
\end{align}

\textbf{Control arm.}
We next consider $\hat{\mu}_0$, trained on the control subsample. By the symmetric
argument, the in-distribution result gives
$\mathbb{E}_{P(X\mid T=0)}[\hat{\mu}_0(X)] = \bar{\mu}_0^{(0)}$, and the OOD result gives
$\mathbb{E}_{P(X\mid T=1)}[\hat{\mu}_0(X)] = \eta_0^{(1)}\,\bar{\mu}_0^{(1)} +
(1-\eta_0^{(1)})\,\bar{\mu}_0^{({\alpha}_0)}$. Applying the law of total expectation:
\begin{align}
    \mathbb{E}_{P(X)}\bigl[\hat{\mu}_0(X)\bigr]
    &= \pi\Bigl[\eta_0^{(1)}\,\bar{\mu}_0^{(1)}
     + \bigl(1-\eta_0^{(1)}\bigr)\,\bar{\mu}_0^{({\alpha}_0)}\Bigr]
     + (1-\pi)\,\bar{\mu}_0^{(0)} \nonumber\\[4pt]
    &= \pi\,\eta_0^{(1)}\,\bar{\mu}_0^{(1)}
     + \pi\bigl(1-\eta_0^{(1)}\bigr)\,\bar{\mu}_0^{({\alpha}_0)}
     + (1-\pi)\,\bar{\mu}_0^{(0)}.
\end{align}
Subtracting the true marginal mean
$\mathbb{E}_{P(X)}[\mu_0^*(X)] = \pi\,\bar{\mu}_0^{(1)} + (1-\pi)\,\bar{\mu}_0^{(0)}$:
\begin{align}
    \mathbb{E}_{P(X)}\bigl[\hat{\mu}_0(X)\bigr] - \mathbb{E}_{P(X)}\bigl[\mu_0^*(X)\bigr]
    &= \pi\,\eta_0^{(1)}\,\bar{\mu}_0^{(1)}
     + \pi\bigl(1-\eta_0^{(1)}\bigr)\,\bar{\mu}_0^{({\alpha}_0)}
     + (1-\pi)\,\bar{\mu}_0^{(0)} \nonumber\\
    &\quad - \pi\,\bar{\mu}_0^{(1)} - (1-\pi)\,\bar{\mu}_0^{(0)} \nonumber\\[4pt]
    %&= \pi\,\eta_0^{(1)}\,\bar{\mu}_0^{(1)}
    % + \pi\bigl(1-%\eta_0^{(1)}\bigr)\,\bar{\mu}_0^{({\alpha}_0)}
    % - \pi\,\bar{\mu}_0^{(1)} \nonumber\\[4pt]
    %&= -\,\pi\bigl(1-\eta_0^{(1)}\bigr)\,\bar{\mu}_0^{(1)}
    % + \pi\bigl(1-%\eta_0^{(1)}\bigr)\,\bar{\mu}_0^{({\alpha}_0)} %\nonumber\\[4pt]
    &= -\,\pi\bigl(1-\eta_0^{(1)}\bigr)
       \Bigl(\bar{\mu}_0^{(1)} - \bar{\mu}_0^{({\alpha}_0)}\Bigr). \label{eq:bias_control}
\end{align}

\textbf{Total bias.}
Combining \eqref{eq:bias_treated} and \eqref{eq:bias_control} yields:
\begin{align*}\label{eq:bias_spb}
    \mathrm{Bias}_{\mathrm{SPB}}
    &\;:=\;
    \Bigl(\mathbb{E}_{P(X)}\bigl[\hat{\mu}_1(X)\bigr] - \mathbb{E}_{P(X)}\bigl[\mu_1^*(X)\bigr]\Bigr)
    \;-\;
    \Bigl(\mathbb{E}_{P(X)}\bigl[\hat{\mu}_0(X)\bigr] - \mathbb{E}_{P(X)}\bigl[\mu_0^*(X)\bigr]\Bigr)
    \nonumber\\[6pt]
    &\;=\;
    (1-\pi)\bigl(1-\eta_1^{(0)}\bigr)\!\left(\bar{\mu}_1^{(\alpha_1)}-\bar{\mu}_1^{(0)}\right)
    \;+\;
    \pi\bigl(1-\eta_0^{(1)}\bigr)\!\left(\bar{\mu}_0^{(1)}-\bar{\mu}_0^{(\alpha_0)}\right)
    \nonumber\\[6pt]
    &\;=\;
    (1-\pi)\bigl(1-\eta_1^{(0)}\bigr)\,w_1\!\left(\bar{\mu}_1^{(1)}-\bar{\mu}_1^{(0)}\right)
    \;+\;
    \pi\bigl(1-\eta_0^{(1)}\bigr)\,w_0\!\left(\bar{\mu}_0^{(1)}-\bar{\mu}_0^{(0)}\right).
\end{align*}
\end{proof}
\textit{Remark}.  \textit{ $\mathrm{Bias}_{\mathrm{SPB}} = 0$ in either of the following cases:
\begin{enumerate}[(i)]
    \item $\eta_1^{(0)} = \eta_0^{(1)} = 1$, i.e., both outcome models extrapolate perfectly out of distribution; 
    \item $\bar{\mu}_t^{(1)} = \bar{\mu}_t^{(0)}$ for both $t \in \{0,1\}$, i.e., the conditional outcome surface does not vary across treatment arms (no covariate shift);
    \item $\pi = 0$ and $\bar{\mu}_1^{(1)} = \bar{\mu}_1^{(0)}$, or $\pi = 1$ and $\bar{\mu}_0^{(1)} = \bar{\mu}_0^{(0)}$, and i.e., degenerate treatment allocation with no out-of-distribution evaluation in the other treatment arm.
\end{enumerate}
In practice, conditions (ii) and (iii) are rarely satisfied. We therefore focus on condition (i) as the primary avenue to mitigate $\mathrm{Bias}_{\mathrm{SPB}}$.
}

\textit{Theorem~\ref{thm:main}. }
\textit{For each level of treatment $T=\{0,1\}$, the outcome learner $\hat{\mu}_t^U$ is trained under the constrained objective function
\eqref{eq:obj}. Then the UMLR based ATE estimate in \eqref{eq:ate} is unbiased: $\mathbb{E}(\widehat{\Delta}^U - \Delta)=0$.}
\begin{proof}
The two constraints in \eqref{eq:obj} imply
\begin{equation*}
  \sum_{i \in \mathcal{R}_1}(f^*(X_i) - Y_i) = 0
  \qquad \text{and} \qquad
  \sum_{i \in \mathcal{R}_2}(f^*(X_i) - Y_i) = 0.
\end{equation*}
 
Adding these two equations gives
\begin{align}
  0
  &= \sum_{i \in \mathcal{R}_1}(f^*(X_i) - Y_i)
   + \sum_{i \in \mathcal{R}_2}(f^*(X_i) - Y_i)
   \notag\\
  &= \sum_{i \in \mathcal{R}_1}
       (\mathring{f}(X_i) - c\tilde{f}(X_i) - Y_i)
   + \sum_{i \in \mathcal{R}_2}
       (\mathring{f}(X_i) - c\tilde{f}(X_i) - Y_i)
   \notag\\
  &= \sum_{i=1}^{n}(\mathring{f}(X_i) - Y_i)
   - c\sum_{i=1}^{n}\tilde{f}(X_i),
   \label{eq:key}
\end{align}
where the second equality substitutes the assumed form
$f^*(X_i) = \mathring{f}(X_i) - c\tilde{f}(X_i)$,
with $\mathring{f}(X_i)$ an unbiased prediction of $Y_i$
and $c$ the systematic bias constant.
 
Since $\mathring{f}(X_i)$ is an unbiased prediction of
$Y_i$, taking the expectation of the first term
in~\eqref{eq:key} gives
\begin{equation*}
  \mathbb{E}\!\left[
    \sum_{i=1}^{n}\bigl(\mathring{f}(X_i) - Y_i\bigr)
  \right] = 0.
\end{equation*}
 
Taking the expectation of~\eqref{eq:key} therefore yields
\begin{equation*}
  0 = 0 - c\,\mathbb{E}\!\left[
        \sum_{i=1}^{n}\tilde{f}(X_i)
      \right].
\end{equation*}
 
To ensure $c\,\mathbb{E}\bigl[\sum_{i=1}^{n}
\tilde{f}(X_i)\bigr] = 0$ holds for all cases of
$\sum_{i=1}^{n} f^*(X_i)$, the constant $c$ must be $0$.

 {It follows that $f^*(X_i) = \mathring{f}(X_i)$, so the
predicted potential outcomes satisfy
\begin{equation*}
  \mathbb{E}[f^*(X_i) \mid X_i, T_i] = \mu^*_t(X_i),
  \qquad t \in \{0, 1\},
\end{equation*}
 which is equivalent to $\eta_1^{(1)}=\eta_0^{(0)}=\eta_1^{(0)} = \eta_0^{(1)} = 1$ . }

Therefore, we have the bias from Theorem 1 as 
\begin{align}\label{eq:bias_spb}
    \mathrm{Bias}_{\mathrm{UMLR}}
    &\;:=\;
    \Bigl(\mathbb{E}_{P(X)}\bigl[\hat{\mu}_1(X)\bigr] - \mathbb{E}_{P(X)}\bigl[\mu_1^*(X)\bigr]\Bigr)
    \;-\;
    \Bigl(\mathbb{E}_{P(X)}\bigl[\hat{\mu}_0(X)\bigr] - \mathbb{E}_{P(X)}\bigl[\mu_0^*(X)\bigr]\Bigr)
    \nonumber\\[6pt]
    &\;=\;
    (1-\pi)\bigl(1-\eta_1^{(0)}\bigr)\,w_1\!\left(\bar{\mu}_1^{(1)}-\bar{\mu}_1^{(0)}\right)
    \;+\;
    \pi\bigl(1-\eta_0^{(1)}\bigr)\,w_0\!\left(\bar{\mu}_0^{(1)}-\bar{\mu}_0^{(0)}\right)
    \nonumber\\[6pt]
     &\;=\;
     0
\end{align}
\end{proof}
\textit{Remark}.  \textit{ More generally, since $\bar{\mu}_1^{(1)} - \bar{\mu}_1^{(0)}$ and
$\bar{\mu}_0^{(1)} - \bar{\mu}_0^{(0)}$ share the same sign under typical covariate
shift — both reflecting the same underlying distributional difference between
$P(X \mid T=1)$ and $P(X \mid T=0)$ — the magnitude of $\mathrm{Bias}_{\mathrm{SPB}}$
decreases monotonically as $\eta_1^{(0)}$ and $\eta_0^{(1)}$ approach $1$. Hence,
even when perfect extrapolation ($\eta_1^{(0)} = \eta_0^{(1)} = 1$) cannot be achieved,
the bias becomes smaller as $\eta_1^{(0)}$ and $\eta_0^{(1)}$ are closer to $1$. }

\subsection*{A.2 The Impact of  Systematic Prediction Bias on Doubly Robust Estimator}

Doubly robust estimators are powerful tools for causal inference because they can provide unbiased estimates when either the propensity score model or the outcome regression model is correctly specified. However, in settings with high-dimensional confounding, the outcome regression component is often modeled using AI/ML regression, which  incurs systematic prediction bias \citep{lee2025systematic}. In this case, the doubly robust estimator loses one layer of protection and must rely entirely on correct specification of the propensity score model. This reliance can be problematic because propensity score models are known to face several challenges in practice, including instability, limited overlap, and difficulty in accurately modeling treatment assignment under high-dimensional confounding \citep{schneeweiss2009high,li2019addressing,zhou2020propensity,rassen2022high}. Together, these issues can lead to biased ATE estimation. Specifically, the following theorem formally establishes the bias of the ATE in AIPW model.
\begin{theorem}
Let the augmented inverse probability weighted (AIPW) estimator is:
\begin{equation*}
    \hat{\Delta}_{\mathrm{AIPW}}
    = \frac{1}{n}\sum_{i=1}^n \left[
        \hat{\mu}_1(X_i) - \hat{\mu}_0(X_i)
        + \frac{T_i\bigl(Y_i - \hat{\mu}_1(X_i)\bigr)}{\hat{e}(X_i)}
        - \frac{(1-T_i)\bigl(Y_i - \hat{\mu}_0(X_i)\bigr)}{1-\hat{e}(X_i)}
    \right]. 
\end{equation*}
Then, under the settings that potential outcomes are estimated by ML/AI regression with SPB, we have  
\begin{align*}\label{eq:bias_spb}
    \mathrm{Bias}_{\mathrm{AIPW}}
    &\;=\;
    (1-\pi)\bigl(1-\eta_1^{(0)}\bigr)\,w_1\!\left(\bar{\mu}_1^{(1)}-\bar{\mu}_1^{(0)}\right)
    \;+\;
    \pi\bigl(1-\eta_0^{(1)}\bigr)\,w_0\!\left(\bar{\mu}_0^{(1)}-\bar{\mu}_0^{(0)}\right).
\end{align*}
\end{theorem}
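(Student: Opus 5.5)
The plan is to split the AIPW estimator into the plug-in outcome-regression part and the inverse-probability-weighted residual correction. The plug-in part is handled by Theorem~\ref{thm:bias_spb}. The main work is showing how the correction term behaves under the stated setting.

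Write
\[
\hat{\Delta}_{\mathrm{AIPW}} = \frac{1}{n}\sum_{i=1}^n\bigl[\hat{\mu}_1(X_i)-\hat{\mu}_0(X_i)\bigr] + \frac{1}{n}\sum_{i=1}^n\Bigl[\frac{T_i(Y_i-\hat{\mu}_1(X_i))}{\hat{e}(X_i)} - \frac{(1-T_i)(Y_i-\hat{\mu}_0(X_i))}{1-\hat{e}(X_i)}\Bigr].
\]
I will take expectations and use the i.i.d.\ structure to replace each sample average by its population counterpart. By Theorem~\ref{thm:bias_spb}, the first term has expectation $\Delta + \mathrm{Bias}_{\mathrm{SPB}}$. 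What remains is to compute the expectation of the correction term and show it adds nothing further.

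For the treated residual term, I condition on $X$ and $T=1$ and use consistency, so that $\mathbb{E}[Y\mid X,T=1]=\mu_1^*(X)$. This reduces the term to
\[
\mathbb{E}\bigl[T\,(\mu_1^*(X)-\hat{\mu}_1(X))/\hat{e}(X)\bigr] = \pi\,\mathbb{E}_{P(X\mid T=1)}\bigl[(\mu_1^*(X)-\hat{\mu}_1(X))/\hat{e}(X)\bigr].
\]
The key input is \eqref{eq:mu1_in}. It says that in distribution, that is on $P(X\mid T=1)$, the SPB model is mean-calibrated: $\mathbb{E}_{P(X\mid T=1)}[\hat{\mu}_1(X)]=\bar{\mu}_1^{(1)}$. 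The analogous in-distribution statement holds for $\hat{\mu}_0$ on the control arm. The in-distribution shrinkage is toward the in-distribution mean, so the residual $\mu_1^*(X)-\hat{\mu}_1(X)$ has conditional mean $(1-\eta_1^{(1)})(\mu_1^*(X)-\bar{\mu}_1^{(1)})$, which averages to zero over the treated arm.

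The intended reading is that the residuals live entirely on the arm where each model was trained. The weighting $1/\hat{e}(X)$ should therefore not re-introduce the out-of-distribution discrepancy. I will therefore treat $\hat{e}$ as not carrying information beyond what is used in the paper's setting, namely that $\hat{e}$ is uncorrelated with the in-distribution residual within each arm. Under that condition, the treated and control correction terms each have expectation zero, and the total bias equals $\mathrm{Bias}_{\mathrm{SPB}}$. Substituting the expression from Theorem~\ref{thm:bias_spb}, with the same parametrization $\bar{\mu}_t^{(\alpha_t)}$ via $w_1,w_0$, gives the stated formula.

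The main obstacle is this step: justifying that the weighted residual term vanishes. If $\hat{e}$ were the true propensity, the standard double-robustness argument would instead reweight the treated residual back to $P(X)$. That would partially cancel the plug-in bias, which contradicts the claimed formula. So the argument has to rely explicitly on the premise motivating this section, that the propensity model is unreliable in high-dimensional confounding. I would first try to formalize this as the assumption that the inverse weights are, in expectation, independent of the SPB residual within each arm, so that the correction collapses to $\pi\cdot\mathbb{E}[1/\hat{e}]\cdot 0$. I would then note that this is where the "lost layer of protection" enters.
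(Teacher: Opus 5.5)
You and the paper part ways at exactly the step you flagged. The paper assumes $\hat e=e^*$, conditions on $X$, and writes $A_1=\mathbb{E}_X[\mu_1^*(X)-\hat\mu_1(X)\mid T=1]=0$, and symmetrically $A_0=0$. Your own computation shows this step is wrong. Since $\pi\,\mathbb{E}_{P(X\mid T=1)}[g(X)/e^*(X)]=\mathbb{E}_{P(X)}[g(X)]$, the true propensity gives $A_1=\mathbb{E}_{P(X)}[\mu_1^*-\hat\mu_1]$ and $A_0=\mathbb{E}_{P(X)}[\mu_0^*-\hat\mu_0]$. Hence $A_1-A_0=-\mathrm{Bias}_{\mathrm{SPB}}$, and the cancellation is exact, not partial as you wrote. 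Under the paper's own hypothesis the population AIPW limit is therefore $\Delta$. This matches the paper's Appendix A.3, which says oracle-propensity doubly robust bias vanishes asymptotically. Your added hypothesis is that, within each arm, the inverse weight has zero covariance with the outcome-model residual: $\mathrm{Cov}_{P(X\mid T=1)}\bigl(1/\hat e(X),\,\mu_1^*(X)-\hat\mu_1(X)\bigr)=0$, plus the analogue with $1/(1-\hat e)$ on $P(X\mid T=0)$; the canonical instance is $\hat e\equiv\pi$. That hypothesis is exactly what makes the paper's unweighted in-arm averages valid, and with it your argument is complete: both corrections vanish by in-arm calibration, and Theorem~\ref{thm:bias_spb} supplies the formula. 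What your route buys is validity; what it costs is scope. The hypothesis is not in the statement and must be written into it. Moreover, combined with $\hat e=e^*$ it forces $\mathbb{E}_{P(X)}[\mu_1^*-\hat\mu_1]=0$ and its control analogue, so each term of the formula vanishes. Your version is therefore a statement about a propensity model whose weights carry no information about the within-arm residuals. It cannot support the accompanying remark that the bias persists when the propensity is correctly specified. Finally, say explicitly that $\hat\mu_t$ and $\hat e$ are treated as fixed (e.g.\ cross-fitted) functions, and that $\mathbb{E}[Y\mid X,T=1]=\mu_1^*(X)$ uses unconfoundedness, not consistency alone.
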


\begin{proof}
    In population, assuming $\hat{e}(x) = e^*(x)$ (propensity correctly specified),
the estimator converges to:
\begin{equation}
    \hat{\Delta}_{\mathrm{AIPW}} \xrightarrow{p}
    \underbrace{\mathbb{E}[\hat{\mu}_1(X) - \hat{\mu}_0(X)]}_{\Delta + Bias_{\mathrm{OR}}}
    + \underbrace{\mathbb{E}\!\left[\frac{T(Y-\hat{\mu}_1(X))}{e^*(X)}\right]}_{A_1}
    - \underbrace{\mathbb{E}\!\left[\frac{(1-T)(Y-\hat{\mu}_0(X))}{1-e^*(X)}\right]}_{A_0}.
\end{equation}

\paragraph{Treated term $A_1$.}
Conditioning on $X$ and using $\mathbb{E}[T \mid X] = e^*(X)$:
\begin{align*}
    A_1 &= \mathbb{E}_X\!\left[\mu_1^*(X) - \hat{\mu}_1(X) \mid T=1\right] \nonumber \\
        &= (1-\eta_1^{(1)})\,\mathbb{E}\!\left[\mu_1^*(X) - \bar{\mu}_1^{(1)} \mid T=1\right]
         = 0.
\end{align*}
The residuals for $T=1$ units are mean-zero over $P(X \mid T=1)$ because
$\hat{\mu}_1$ is marginally calibrated on its training distribution.

\paragraph{Control term $A_0$.}
By the symmetric argument over $P(X \mid T=0)$:
\begin{align*}
    A_0 &= \mathbb{E}_X\!\left[\mu_0^*(X) - \hat{\mu}_0(X) \mid T=0\right] \nonumber \\
        &= (1-\eta_0^{(0)})\,\mathbb{E}\!\left[\mu_0^*(X) - \bar{\mu}_0^{(0)} \mid T=0\right]
         = 0.
\end{align*}

Since $A_1 - A_0 = 0$, the AIPW estimator satisfies:
\begin{equation*}
{
    \hat{\Delta}_{\mathrm{AIPW}} \xrightarrow{p}
    \Delta
    + (1-\pi)\bigl(1-\eta_1^{(0)}\bigr)\,w_1\!\left(\bar{\mu}_1^{(1)}-\bar{\mu}_1^{(0)}\right)
    \;+\;
    \pi\bigl(1-\eta_0^{(1)}\bigr)\,w_0\!\left(\bar{\mu}_0^{(1)}-\bar{\mu}_0^{(0)}\right).
}
\end{equation*}

That is, 
\begin{equation*}
    \hat{\Delta}_{\mathrm{AIPW}}
    = \frac{1}{n}\sum_{i=1}^n \left[
        \hat{\mu}_1(X_i) - \hat{\mu}_0(X_i)
        + \frac{T_i\bigl(Y_i - \hat{\mu}_1(X_i)\bigr)}{\hat{e}(X_i)}
        - \frac{(1-T_i)\bigl(Y_i - \hat{\mu}_0(X_i)\bigr)}{1-\hat{e}(X_i)}
    \right]. 
\end{equation*}

\end{proof}

\textit{Remark}.  \textit{ The AIPW estimator is exactly as biased as the Outcome Regression estimator under SPB 
even when the propensity score is correctly specified.}

\subsection*{A.3 Simulation Study: Finite-Sample Bias in Doubly Robust Estimation}
 {We conducted additional simulation studies to evaluate the finite-sample performance of doubly robust estimators when conventional MLR is employed as the outcome regression models for potential outcome prediction with SPB. Specifically, we assess the influence of systematic prediction bias on ATE estimation bias. In addition, we examined the performance of doubly robust model implemented by  the double machine learning (DML) estimator \citep{chernozhukov2018double} }

 {Using the data-generating procedure of Section~\ref{sec:sim}, we simulated datasets with $p = 200$ covariates, across sample sizes $n \in \{100, 500, 2{,}000, 5{,}000\}$. }

 \paragraph{The Influence of Systematic Prediction Bias on ATE Estimation in Doubly Robust Estimators.}

To attribute bias in doubly robust ATE estimates solely to systematic prediction bias in the outcome model, we use the true propensity score, rather than an ML-estimated counterpart, as the oracle treatment assignment probability. Since doubly robust ATE bias vanishes asymptotically, we evaluate finite-sample bias across $n \in \{100, 500, 2{,}000, 5{,}000\}$, where $n = 5{,}000$ represents a practically large sample. We further consider two noise levels, $\sigma = 5$ and $\sigma = 1$.   }

 {As shown in Figure~\ref{fig:fig4}(a), doubly robust ATE bias decreases monotonically with sample size, yet convergence to zero is slow: non-negligible bias remains even at $n = 5{,}000$. Counterintuitively, higher noise levels reduce bias. A possible explanation is that elevated noise attenuates the contribution of the outcome regression component to the doubly robust estimator, thereby dampening the influence of systematic prediction bias on ATE estimation. This pattern underscores that conventional MLR-based doubly robust estimators are most sensitive to SPB in high signal-to-noise settings, the very conditions under which precise treatment effect estimates are most consequential for clinical decision-making.  }

\paragraph{Bias of ATE Estimation by MLR- vs. UMLR-Based Doubly Robust Estimators.}
We further compare the bias in ATE estimation between MLR- and UMLR-based doubly robust estimators. Specifically, at $n = 500$, we evaluate the ATE bias for the RCT estimator, the UMLR-based AIPW estimator, the MLR-based AIPW estimator, and the MLR-based DML estimator. As shown in Figure~\ref{fig:fig4}(b), non-negligible ATE bias persists for the MLR-based AIPW and DML estimators, whereas the UMLR-based AIPW estimator achieves nearly unbiased ATE estimation, comparable to the gold-standard RCT estimate. These results suggest that systematic prediction bias in MLR can lead to biased ATE estimation for both AIPW and DML estimators, although DML may partially mitigate this bias. In contrast, UMLR can substantially reduce ATE estimation bias by reducing systematic prediction bias, which has a direct impact on ATE estimation. Because UMLR models are compatible with existing MLR-based doubly robust estimators, they provide an alternative avenue for causal inference using RWD.

\begin{figure}[ht]
  \centering
  \includegraphics[width=1\linewidth]{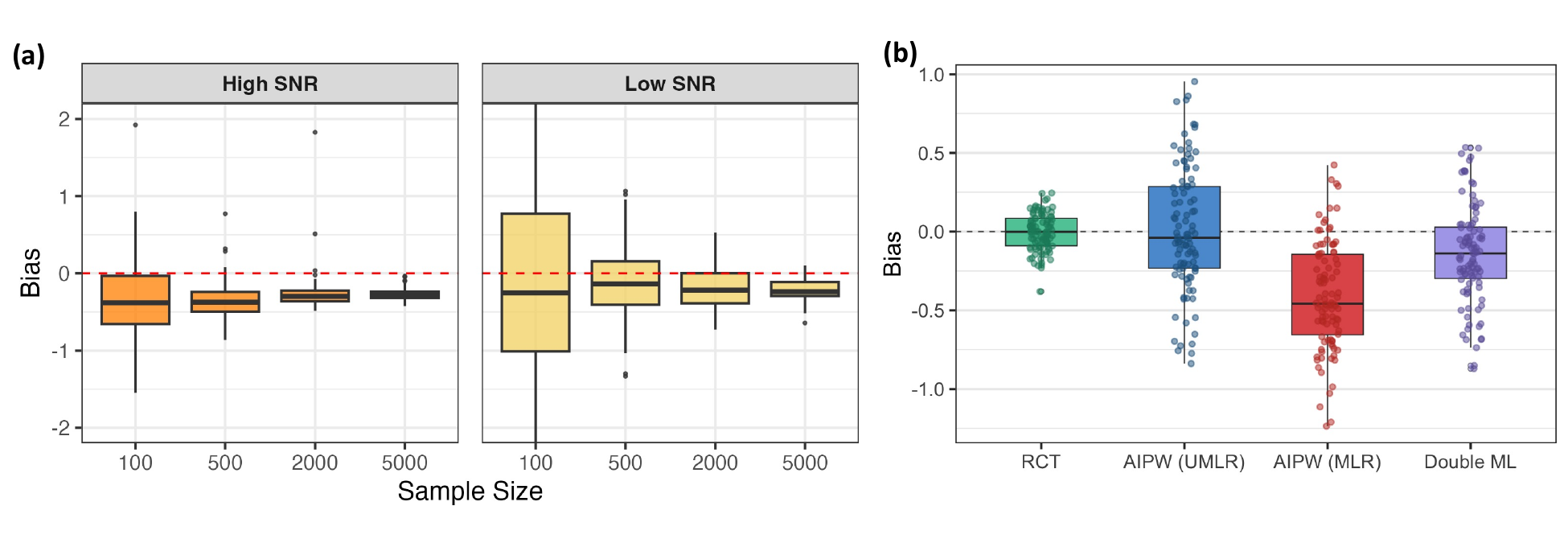}
  \caption{Influence of systematic prediction bias on doubly robust ATE estimation. \textbf{(a)} With oracle propensity scores, ATE bias decreases monotonically with sample size but converges slowly; non-negligible bias remains at $n = 5{,}000$. Higher noise levels attenuate SPB-induced bias in the outcome regression. \textbf{(b)} ATE bias persists in MLR-based AIPW and DML estimators, while UMLR-based AIPW estimator achieves nearly unbiased estimation. }  
  \label{fig:fig4}
\end{figure}
\paragraph{Unbiased ATE by UMLR-based Causal Inference Models.}
Under the ideal UMLR setting, where $(\eta_1^{(1)}=\eta_0^{(0)}=\eta_1^{(0)}=\eta_0^{(1)}=1)$, the UMLR-based doubly robust estimator is unbiased.
In practice, even when UMLR yields $(\eta_t^{(s)}<1)$, the resulting SPB is typically much smaller than that produced by conventional ML/AI regression. Therefore, the bias in ATE estimation can be substantially reduced, as discussed in the remark following Theorem~\ref{thm:main}.

% $\blacksquare$
% \paragraph{Bias--variance trade-off.}
% Imposing the constraints in \eqref{eq:obj} reduces bias at the cost of potentially increased estimator
% variance, since the feasible set is smaller than that of the unconstrained problem. In finite
% samples this may raise the MSE of $f^*$ relative to the unconstrained solution. However, because
% the ATE estimator is a population-level average of $f^*(X)$, the variance inflation is
% $O(n^{-1/2})$ and vanishes as $n \to \infty$, whereas the bias of the unconstrained estimator
% is $O(1)$ and does not diminish. The constrained solution therefore dominates for the inferential
% task of ATE estimation whenever $n$ is moderately large.

\subsubsection*{A.4 UKB RWD Analysis Supplementary Results and Evaluation Metrics}

 {Appendix~Table~\ref{tab:descriptive} summarizes baseline covariates   for participants with chronic pain in the UK Biobank, stratified by opioid use status. The analytic cohort included 19,736 participants, of whom 523 (2.6\%) reported current opioid use and 19,213 (97.4\%) were non-users, indicating substantial imbalance in treatment group sizes. Compared with non-users, opioid users were older, had higher body mass index, greater socioeconomic deprivation, lower educational attainment, and were more likely to be current or former smokers. They also exhibited less favorable inflammatory and metabolic biomarker profiles, including higher levels of C-reactive protein, glucose, HbA1c, triglycerides, and cystatin C, and lower levels of HDL cholesterol, IGF-1, SHBG, and vitamin D. The substantial differences in baseline characteristics highlight the needs of for careful confounding adjustment when estimating the causal effect of opioid use on blood pressure.}

 {After propensity score matching, 518 opioid users were successfully matched to 518 non-users. Baseline covariates were well balanced between the two groups, with insignificant $p$-values and minimal standardized mean differences (generally below 0.1) for all covariates (Appendix~Table~\ref{tab:psm_descriptive}).}

 {Appendix~Table~\ref{tab:eval} evaluates the performance for MLR and UMLR under the S- and T-learners in the UK Biobank chronic pain cohort. Consistent with the simulation results, MLR exhibited noticeable systematic prediction bias, with bias slopes of approximately 0.85 for both learners and strong correlations between observed and prediction residuals ($\mathrm{Cor}(y,\hat{\varepsilon}) \approx -0.79$). In contrast, UMLR substantially reduced prediction bias, with slopes close to zero and near-zero residual correlations, indicating effective removal of systematic prediction bias. UMLR produced larger RMSE and MAE than MLR for both learners. Corresponding diagnostics are not reported for the X-learner because its second-stage models target pseudo-outcomes rather than the observed outcome directly.}

\begin{table}[ht]
\centering
\caption{Descriptive statistics of covariates stratified by opioid use status among participants with chronic pain in the UK Biobank.}
\label{tab:descriptive}
\setlength{\tabcolsep}{14.5pt}
\small
\begin{tabular}{lrrr}
\toprule
\textbf{Characteristic} 
    & \textbf{Non-users} 
    & \textbf{Opioid users} 
    & \textbf{$p$-value} \\
    & \textbf{($n = 19{,}213$)} 
    & \textbf{($n = 523$)} 
    & \\
\midrule
\multicolumn{4}{l}{\textit{Demographic and lifestyle characteristics}} \\
Sex = male, \%   
    & 8384 (43.6)     & 218 (41.7)      & 0.398    \\
\addlinespace
Age, mean (SD) 
    & 53.57 (7.28)    & 56.53 (7.53)    & $<$0.001 \\
    
\addlinespace
Ethnicity, \%                        &              &            & 0.797 \\
\quad 1 (White)                      & 18482 (96.2) & 509 (97.3) &       \\
\quad 2 (Mixed)                      & 124 (0.6)    & 3 (0.6)    &       \\
\quad 3 (Asian or Asian British)     & 261 (1.4)    & 6 (1.1)    &       \\
\quad 4 (Black or Black British)     & 150 (0.8)    & 2 (0.4)    &       \\
\quad 5 (Chinese)                    & 53 (0.3)     & 1 (0.2)    &       \\
\quad 6 (Other ethnic group)         & 143 (0.7)    & 2 (0.4)    &       \\

Education, \%    &               &               & $<$0.001 \\
\quad 1  (College or University degree)        & 9244 (48.1)   & 181 (34.6)    &          \\
\quad 2  (A levels/AS levels or equivalent)        & 2798 (14.6)   & 72 (13.8)     &          \\
\quad 3  (O levels/GCSEs or equivalent)        & 4199 (21.9)   & 157 (30.0)    &          \\
\quad 4  (CSEs or equivalent)        & 977 (5.1)     & 35 (6.7)      &          \\
\quad 5  (NVQ or HND or HNC or equivalent)        & 1068 (5.6)    & 43 (8.2)      &          \\
\quad 6  (Other professional qualifications eg: nursing, teaching)        & 927 (4.8)     & 35 (6.7)      &          \\

\addlinespace
Smoking status, \%    &              &             & $<$0.001 \\
\quad 0 (Never)       & 11336 (59.0) & 259 (49.5)  &          \\
\quad 1 (Previous)    & 6499 (33.8)  & 210 (40.2)  &          \\
\quad 2 (Current)     & 1378 (7.2)   & 54 (10.3)   &          \\

\addlinespace
Alcohol drinking status, \%  &             &            & $<$0.001 \\
\quad 0 (Never)              & 532 (2.8)   & 17 (3.3)   &          \\
\quad 1 (Previous)           & 493 (2.6)   & 31 (5.9)   &          \\
\quad 2 (Current)            & 18188 (94.7)& 475 (90.8) &          \\

\midrule

\multicolumn{4}{l}{\textit{Anthropometric and socioeconomic measures}} \\
BMI, mean (SD) 
    & 26.60 (4.34)    & 29.65 (5.71)    & $<$0.001 \\
Townsend Deprivation Index, mean (SD) 
    & $-$1.83 (2.76)  & $-$1.12 (2.98)  & $<$0.001 \\

\midrule
\multicolumn{4}{l}{\textit{Blood biomarkers}} \\
Albumin, mean (SD)                      
    & 45.44 (2.44)    & 44.94 (2.44)    & $<$0.001 \\
Alkaline phosphatase, mean (SD)         
    & 78.99 (22.75)   & 83.21 (24.30)   & $<$0.001 \\
Alanine aminotransferase, mean (SD)     
    & 22.56 (13.16)   & 24.76 (13.50)   & $<$0.001 \\
Apolipoprotein A, mean (SD)             
    & 1.55 (0.25)     & 1.49 (0.25)     & $<$0.001 \\
Apolipoprotein B, mean (SD)             
    & 1.03 (0.22)     & 1.04 (0.23)     & 0.494    \\
Aspartate aminotransferase, mean (SD)   
    & 25.28 (8.65)    & 26.10 (9.01)    & 0.034    \\
Direct bilirubin, mean (SD)             
    & 1.73 (0.74)     & 1.58 (0.59)     & $<$0.001 \\
Urea, mean (SD)                         
    & 5.22 (1.19)     & 5.47 (1.32)     & $<$0.001 \\
Calcium, mean (SD)                      
    & 2.38 (0.09)     & 2.37 (0.09)     & 0.100    \\
Cholesterol, mean (SD)                  
    & 5.71 (1.03)     & 5.59 (1.13)     & 0.008    \\
Creatinine, mean (SD)                   
    & 71.39 (15.09)   & 70.81 (14.15)   & 0.384    \\
C-reactive protein, mean (SD)           
    & 2.04 (3.47)     & 3.39 (4.62)     & $<$0.001 \\
Cystatin C, mean (SD)                   
    & 0.86 (0.13)     & 0.91 (0.14)     & $<$0.001 \\
Gamma glutamyltransferase, mean (SD)    
    & 32.47 (30.64)   & 44.54 (54.35)   & $<$0.001 \\
Glucose, mean (SD)                      
    & 4.94 (0.78)     & 5.23 (1.80)     & $<$0.001 \\
Glycated haemoglobin (HbA1c), mean (SD) 
    & 34.66 (4.55)    & 36.30 (7.24)    & $<$0.001 \\
HDL cholesterol, mean (SD)              
    & 1.48 (0.38)     & 1.37 (0.36)     & $<$0.001 \\
IGF-1, mean (SD)                        
    & 22.11 (5.31)    & 20.48 (5.47)    & $<$0.001 \\
LDL direct, mean (SD)                   
    & 3.58 (0.79)     & 3.53 (0.85)     & 0.175    \\
Phosphate, mean (SD)                    
    & 1.16 (0.15)     & 1.14 (0.15)     & 0.004    \\
SHBG, mean (SD)                         
    & 53.37 (28.23)   & 47.28 (25.80)   & $<$0.001 \\
Total bilirubin, mean (SD)              
    & 9.23 (4.33)     & 8.08 (3.34)     & $<$0.001 \\
Testosterone, mean (SD)                 
    & 5.94 (6.00)     & 5.34 (5.65)     & 0.022    \\
Total protein, mean (SD)                
    & 72.27 (3.75)    & 71.91 (3.80)    & 0.031    \\
Triglycerides, mean (SD)                
    & 1.62 (0.94)     & 1.95 (1.20)     & $<$0.001 \\
Urate, mean (SD)                        
    & 298.19 (75.97)  & 310.25 (83.09)  & $<$0.001 \\
Vitamin D, mean (SD)                    
    & 50.16 (20.20)   & 47.44 (21.40)   & 0.002    \\
\bottomrule
\end{tabular}
\begin{tablenotes}
\small
\item BMI = body mass index; SHBG = sex hormone-binding globulin; IGF-1 = insulin-like growth factor 1; HbA1c = glycated haemoglobin; HDL = high-density lipoprotein; LDL = low-density lipoprotein; SBP = systolic blood pressure; SD = standard deviation. $p$-values are from two-sample $t$-tests for continuous variables and chi-squared tests for categorical variables.
\end{tablenotes}
\end{table}

%\tblue{ZY: the descriptive statistics table for PSM particiapnts was added Table~\ref{tab:psm_descriptive}. }

\begin{table}[ht]
\centering
\caption{Descriptive statistics of covariates stratified by
opioid use status among propensity score matched participants with
chronic pain in the UK Biobank.}
\label{tab:psm_descriptive}
\small
\setlength{\tabcolsep}{14.5pt}
\begin{tabular}{lrrr}
\toprule
\textbf{Characteristic}
    & \textbf{Non-users}
    & \textbf{Opioid users}
    & \textbf{$p$-value} \\
    & \textbf{($n = 518$)}
    & \textbf{($n = 518$)}
    & \\
\midrule

\multicolumn{4}{l}{\textit{Demographic and lifestyle characteristics}} \\
Sex = male, \%
    & 205 (39.6)   & 217 (41.9)   & 0.487 \\

\addlinespace
Age, mean (SD)
    & 56.27 (7.42) & 56.47 (7.53) & 0.660 \\

\addlinespace
Ethnicity, \%                              &              &              & 0.388 \\
\quad 1 (White)                            & 510 (98.5)   & 504 (97.3)   &       \\
\quad 2 (Mixed)                            & 0 (0.0)      & 3 (0.6)      &       \\
\quad 3 (Asian or Asian British)           & 3 (0.6)      & 6 (1.2)      &       \\
\quad 4 (Black or Black British)           & 2 (0.4)      & 2 (0.4)      &       \\
\quad 5 (Chinese)                          & 0 (0.0)      & 1 (0.2)      &       \\
\quad 6 (Other ethnic group)               & 3 (0.6)      & 2 (0.4)      &       \\

\addlinespace
Education, \%                              &              &              & 0.979 \\
\quad 1 (College or University degree)     & 176 (34.0)   & 180 (34.7)   &       \\
\quad 2 (A levels/AS levels or equivalent) & 70 (13.5)    & 70 (13.5)    &       \\
\quad 3 (O levels/GCSEs or equivalent)    & 158 (30.5)   & 156 (30.1)   &       \\
\quad 4 (CSEs or equivalent)              & 37 (7.1)     & 35 (6.8)     &       \\
\quad 5 (NVQ or HND or HNC or equivalent) & 37 (7.1)     & 42 (8.1)     &       \\
\quad 6 (Other professional qualifications eg: nursing, teaching)
                                           & 40 (7.7)     & 35 (6.8)     &       \\

\addlinespace
Smoking status, \%                         &              &              & 0.838 \\
\quad 0 (Never)                            & 265 (51.2)   & 256 (49.4)   &       \\
\quad 1 (Previous)                         & 199 (38.4)   & 208 (40.2)   &       \\
\quad 2 (Current)                          & 54 (10.4)    & 54 (10.4)    &       \\

\addlinespace
Alcohol drinking status, \%                &              &              & 0.914 \\
\quad 0 (Never)                            & 19 (3.7)     & 17 (3.3)     &       \\
\quad 1 (Previous)                         & 28 (5.4)     & 30 (5.8)     &       \\
\quad 2 (Current)                          & 471 (90.9)   & 471 (90.9)   &       \\

\midrule
\multicolumn{4}{l}{\textit{Anthropometric and socioeconomic measures}} \\
BMI, mean (SD)
    & 29.59 (6.04)    & 29.51 (5.54)    & 0.820 \\
Townsend Deprivation Index, mean (SD)
    & $-$1.11 (3.07)  & $-$1.13 (2.99)  & 0.938 \\

\midrule
\multicolumn{4}{l}{\textit{Blood biomarkers}} \\
Albumin, mean (SD)
    & 44.94 (2.46)    & 44.94 (2.43)    & 0.999 \\
Alkaline phosphatase, mean (SD)
    & 82.47 (22.87)   & 83.25 (24.36)   & 0.598 \\
Alanine aminotransferase, mean (SD)
    & 26.05 (18.20)   & 24.80 (13.54)   & 0.211 \\
Apolipoprotein A, mean (SD)
    & 1.48 (0.24)     & 1.49 (0.25)     & 0.642 \\
Apolipoprotein B, mean (SD)
    & 1.04 (0.23)     & 1.04 (0.23)     & 0.927 \\
Aspartate aminotransferase, mean (SD)
    & 27.62 (20.85)   & 26.12 (9.03)    & 0.132 \\
Direct bilirubin, mean (SD)
    & 1.56 (0.56)     & 1.58 (0.59)     & 0.553 \\
Urea, mean (SD)
    & 5.46 (1.38)     & 5.46 (1.32)     & 0.976 \\
Calcium, mean (SD)
    & 2.37 (0.09)     & 2.37 (0.09)     & 0.515 \\
Cholesterol, mean (SD)
    & 5.58 (1.03)     & 5.60 (1.13)     & 0.800 \\
Creatinine, mean (SD)
    & 70.09 (14.20)   & 70.90 (14.12)   & 0.353 \\
C-reactive protein, mean (SD)
    & 3.20 (5.36)     & 3.37 (4.63)     & 0.579 \\
Cystatin C, mean (SD)
    & 0.90 (0.14)     & 0.91 (0.14)     & 0.146 \\
Gamma glutamyltransferase, mean (SD)
    & 44.70 (59.50)   & 44.35 (54.48)   & 0.923 \\
Glucose, mean (SD)
    & 5.16 (1.19)     & 5.21 (1.77)     & 0.591 \\
Glycated haemoglobin (HbA1c), mean (SD)
    & 36.15 (6.29)    & 36.19 (7.08)    & 0.916 \\
HDL cholesterol, mean (SD)
    & 1.36 (0.35)     & 1.37 (0.36)     & 0.601 \\
IGF-1, mean (SD)
    & 20.49 (5.20)    & 20.52 (5.48)    & 0.920 \\
LDL direct, mean (SD)
    & 3.53 (0.79)     & 3.53 (0.85)     & 0.981 \\
Phosphate, mean (SD)
    & 1.14 (0.15)     & 1.14 (0.15)     & 0.836 \\
SHBG, mean (SD)
    & 45.99 (24.52)   & 47.48 (25.78)   & 0.341 \\
Total bilirubin, mean (SD)
    & 8.14 (3.17)     & 8.09 (3.35)     & 0.815 \\
Testosterone, mean (SD)
    & 5.06 (5.31)     & 5.35 (5.65)     & 0.382 \\
Total protein, mean (SD)
    & 71.72 (3.57)    & 71.88 (3.75)    & 0.505 \\
Triglycerides, mean (SD)
    & 1.92 (1.04)     & 1.93 (1.14)     & 0.812 \\
Urate, mean (SD)
    & 310.32 (81.60)  & 309.65 (83.17)  & 0.896 \\
Vitamin D, mean (SD)
    & 47.56 (19.82)   & 47.63 (21.40)   & 0.957 \\
\bottomrule
\end{tabular}
\begin{tablenotes}
\small
\item BMI = body mass index; SHBG = sex hormone-binding globulin;
IGF-1 = insulin-like growth factor 1; HbA1c = glycated haemoglobin;
HDL = high-density lipoprotein; LDL = low-density lipoprotein;
SBP = systolic blood pressure; DBP = diastolic blood pressure;
SD = standard deviation. $p$-values are from two-sample $t$-tests for continuous variables and chi-squared tests for categorical variables.
%Propensity score matching was performed using 1:1 nearest-neighbour matching without replacement.
\end{tablenotes}
\end{table}

\clearpage
\vspace*{-1cm}
\begin{table}[ht]
\centering
\caption{Evaluation of ATE estimators for the effect of opioid use on systolic blood
pressure in the UK Biobank chronic pain cohort (testing set).}
\label{tab:eval}
\setlength{\tabcolsep}{9pt}
\begin{tabular}{l cccc cccc}
\toprule
& \multicolumn{4}{c}{S-learner} & \multicolumn{4}{c}{T-learner} \\
\cmidrule(lr){2-5} \cmidrule(lr){6-9}
Method & Bias (Slope) & RMSE & MAE & $\mathrm{Cor}({\mathbf{y}}, \hat{\boldsymbol{\varepsilon}})$ & Bias (Slope) & RMSE & MAE & $\mathrm{Cor}({\mathbf{y}}, \hat{\boldsymbol{\varepsilon}})$ \\
\midrule
MLR  & 0.852 & 20.92 & 16.19 & -0.79 & 0.853 & 21.00 & 16.23 & -0.79 \\
UMLR & 0.097 & 55.88 & 45.10 & -0.03 & 0.096 & 55.22 & 44.62 & -0.03 \\
\bottomrule
\end{tabular}
\end{table}

\end{document}